\def\BibTeX{{\rm B\kern-.05em{\sc i\kern-.025em b}\kern-.08em
    T\kern-.1667em\lower.7ex\hbox{E}\kern-.125emX}}
  \def\hrulefill{\leavevmode\leaders\hrule height 1pt\hfill\kern\z@}
\newcommand{\Laplace}{\mathscr{L}}
\definecolor{gris245}{RGB}{245,245,245}
\definecolor{olive}{RGB}{50,140,50}
\definecolor{brun}{RGB}{175,100,80}
\newtheorem*{theorem}{Theorem}
\newtheorem*{corollary}{Corollary}
\begin{document}

\title{On the Exact Sum PDF and CDF of 
$\alpha$-$\mu$ Variates 
}

\author{Fernando Darío Almeida García, \textit{Member, IEEE},\\ Francisco Raimundo Albuquerque Parente, \textit{Graduate Student Member, IEEE}, \\ Michel Daoud Yacoub, \textit{Member, IEEE}, and José Cândido Silveira Santos Filho, \textit{Member, IEEE}

\thanks{
\hl{A Mathematica (Wolfram) implementation for computing the probability density function (PDF) and cumulative distribution function (CDF) of the sum of independent }\mbox{\hl{$\alpha$-$\mu$}}\hl{ random variables is available at the GitHub repository: }\href{https://github.com/Fernando-AlmeidaGit/On-the-Exact-Sum-PDF-and-CDF-of-alpha-mu-Variates.git}{\hl{https://github.com/Fernando-AlmeidaGit/On-the-Exact-Sum-PDF-and-CDF-of-alpha-mu-Variates.git.}}\hl{ The implementation is valid for $\alpha, \mu, \hat{r} > 0$ and $L \in \mathbb{N}$.}
The work of Fernando Darío Almeida García was supported by the São Paulo Research Foundation \mbox{(FAPESP)} under Grant \mbox{2021/03923-9}. The work of Francisco Raimundo Albuquerque Parente was supported by the São Paulo Research Foundation \mbox{(FAPESP)} under Grant \mbox{2018/25009-4.}

The authors are with the Wireless Technology Laboratory (WissTek), Department of Communications, School of Electrical and Computer Engineering, State University of Campinas \mbox{(UNICAMP)}, Campinas, SP 13083-852, Brazil (e-mail: \mbox{ferchoalmeida1@gmail.com}; \mbox{franciscoraparente@gmail.com}; \mbox{parente@ieee.org}; \mbox{mdyacoub@unicamp.br}; \mbox{jcssf@unicamp.br}).
}}

% \markboth{IEEE Transactions on Wireless Communications}%
% {Almeida \MakeLowercase{\textit{et al.}}:}

\maketitle

\begin{abstract}
The sum of random variables (RVs) appears extensively in wireless communications, at large, both conventional and advanced, and has been subject of longstanding research.
The statistical characterization of the referred sum is crucial to determine the performance of such communications systems.
Although efforts have been undertaken to unveil these sum statistics, e.g., probability density function (PDF) and cumulative distribution function (CDF), no general efficient nor manageable solutions capable of evaluating the exact sum PDF and CDF are available to date. The only formulations are given in terms of either the multi-fold Brennan's integral or the multivariate Fox $H$-function.  
Unfortunately, these methods are only feasible up to a certain number of RVs, meaning that  when the number of RVs in the sum increases, the computation of the sum PDF and CDF is subject to stability problems, convergence issues, or inaccurate results.
In this paper, we derive new, simple, exact formulations for the PDF and CDF of the sum of $L$ independent and identically distributed $\alpha$-$\mu$ RVs.
Unlike the available solutions, the computational complexity of our analytical expressions is independent of the number of summands.
Capitalizing on our unprecedented findings, we analyze, in exact and asymptotic manners, the performance of $L$-branch pre-detection equal-gain combining and maximal-ratio combining receivers over \mbox{$\alpha$-$\mu$} fading environments. 
The coding and diversity gains of the system for both receivers are analyzed and quantified. Moreover, numerical simulations show that the computation time reduces drastically when using our expressions, which are arguably the most efficient and manageable formulations derived so far.
\end{abstract}

\begin{IEEEkeywords}
$\alpha$-$\mu$ distribution, cumulative distribution function,
fading channels, probability density function, sums of random variables.
\end{IEEEkeywords}

\section{Introduction} 
\label{sec:intro}

\IEEEPARstart{S}{ums} of random variables (RVs) arise in different areas of knowledge, communications being one of them. In wireless and radar systems, for instance, these sums are found in applications involving performance analysis in signal detection, linear equalization, intersymbol interference, phase jitter, diversity-combining receivers, false-alarm rate detectors, secrecy capacity analysis~\cite{Halpern77,Beaulieu91,Garcia19,Annamalai00,Zhang97,Alouini01,Lei17}, etc. In these cases, sum statistics such as the probability density function (PDF), and the cumulative distribution function (CDF), among others, may be required.
In wireless communications, a number of general stochastic processes have proven useful to model the fading channel. In particular, the $\alpha$-$\mu$ process has gained considerable attention. In addition to being mathematically tractable, it adapts quite well to the statistical variations of the propagated signal and, in many instances, it yields much better results than the conventional models \cite{Leonardo15,Wu10}.
Many field measurements reported in the literature confirm the broad applicability of the $\alpha$-$\mu$ distribution in practical scenarios \cite{Dias09,Wu10,Boulogeorgos19,Karadimas10,Chong11,Michalopoulou12,Michalopoulou13}.
In \cite{Reig13}, the $\alpha$-$\mu$ distribution was placed within the context of composite distributions. Specifically, several composite distributions were employed to fit experimental data. For the analysis, both shadowing and multipath effects were considered. More interestingly, the authors showed that a single $\alpha$-$\mu$ distribution provides an excellent statistical fitting to the experimental data with the notorious advantage of its analytical simplicity. 
In \cite{Reig17, Marins19,Anjos19,Marins21}, it was shown that the $\alpha$-$\mu$ distribution can be used to model the fading channel in millimeter-wave communications. 
Furthermore, several works have explored the $\alpha$-$\mu$ model in emerging terahertz (THz) communications~\cite{Papasotiriou21,thz_dualhop,THz_pe,THz_relay}. In particular, the authors in \cite{Papasotiriou21} demonstrated that, among the unimodal distributions investigated therein, the $\alpha$-$\mu$ distribution proved best suited to model the fading signals in THz transmissions. In some circumstances, the statistics of the fading signal in a THz scenario appears in a multimodal manner. In such cases, multimodal distributions will certainly yield a better fit (e.g., the bimodal fluctuating two-ray model~\cite{thz_multimodal,ftr}).

The theory involving the statistics of the sum of fading variates has become a subject of interest since the dawn of wireless communications. On the other hand, it is widely known that key statistics of the exact sum of fading envelopes are either quite involving or unavailable. Thus, resorting to approximate, estimate, and exact methods has become a common practice. Next, we revisit some results obtained with such approaches.

\subsection{Motivation and Related works}

Notable works concerning the exact sum statistics for fading channels include \cite{Yilmaz09conf} for Weibull, \cite{Dharmawansa07,Rahman11} for Nakagami-$m$, and \cite{Rahama18,Kong2021} for $\alpha$-$\mu$. 
Although praiseworthy, these works make use of nested infinite sums, are expressed in terms of multi-fold integrals, or are given in terms of special functions that, unfortunately, are not yet available in any mathematical software.
For instance, recently, in \cite{Kong2021}, the authors employed the multivariate Fox $H$-function to obtain the sum PDF of independent, non-identically distributed (i.n.i.d.) $\alpha$-$\mu$ RVs. Regrettably, the multivariate Fox $H$-function is calculated through a multi-fold complex integration. This means that if one needs to compute the sum of $L$ $\alpha$-$\mu$ RVs, then the numerical computation of $L$ complex infinite-range integrals will be required, even for the independent and identically distributed (i.i.d.) case.
Accordingly, the computations in \cite{Yilmaz09conf,Dharmawansa07,Rahman11,Rahama18,Kong2021} become unfeasible as the number of summands increases.
A more general exact framework, applicable to any sum of non-negative variates, involves the so-called Brennan's  integral~\cite{Brennan59}. Here again, as in the previous cases, the computations are time-consuming, extremely costly, subject to convergence problems and to stability issues, and becomes unfeasible when the number of summands increases (say, above five).

Due to the cumbersome math behind the exact analysis, approximations are often used. (The readers are referred to \cite{Nakagami60,Karagiannidis05,Hu05,Filho06,Reig00,Costa08} for detailed discussions on the approximate sums of Rayleigh, Weibull, and Nakagami-$m$ RVs.)
In \cite{Costa08aplha}, the authors employed a moment-matching approach to approximate the sum PDF of i.i.d. $\alpha$-$\mu$ RVs by another $\alpha$-$\mu$ PDF.
In \cite{Perim20}, an optimal asymptotic-matching technique was used to approximate the sum PDF of positive i.n.i.d. RVs. 
In \cite{Parente19}, the results of \cite{Perim20} were further extended to address the approximate sum PDF of positive correlated RVs.
More recently, in~\cite{Payami21}, using the moment-matching method, the authors proposed two variable-order approximations to the sum of i.i.d. and i.n.i.d. $\alpha$-$\mu$ variates. 
To do so, the authors approximated the moments of the exact sum PDF to the moments of a truncated Puiseux series composed of a few identical $\alpha$-$\mu$ distributions. An alternative approach was proposed in \cite{ref_rev2_1,ref_rev2_2}, where the tails of the sum CDF of independent RVs were estimated using the concept of importance~sampling.

Despite the great effort to derive the $\alpha$-$\mu$ sum statistics, the search for efficient, tractable, and fast solutions is still an open problem in the communications community, a task even more challenging when considering the exact~approach.

\subsection{Contributions}

In this work, new, handy, exact formulas for the sum PDF and the sum CDF of i.i.d. $\alpha$-$\mu$ RVs is proposed. These are arguably the most efficient solutions reported in the open literature so far.

Next, we outline the main contributions of this paper.
\begin{itemize}
    \item Derivation of exact novel formulations for the PDF and the CDF of the sum of $L$ i.i.d. $\alpha$-$\mu$ variates.  
    As shall be seen, the proposed expressions are fast, compact, and, more importantly, their mathematical complexity remains the same independently of the number of summands.
    \item  Derivation of the exact formulations for the performance analysis of $L$-branch pre-detection equal-gain combining (EGC) and maximal-ratio combining (MRC) receivers operating over $\alpha$-$\mu$ fading environments. More precisely, exact solutions for the average symbol error rate (ASER) and outage probability (OP) in EGC and MRC are obtained.
    Also, the coding and diversity gains of the system for both receivers are analyzed and quantified.
    \item Asymptotic analysis so as to provide an intuitive understanding on how the distribution parameters impact the system performance. 
    It is worth mentioning that the exact and asymptotic solutions presented herein for the ASER and OP are also original contributions, being simpler and less costly than those found elsewhere (e.g., in \cite{Costa08aplha,Payami21,ElAyadi14}, and \cite{BenIssaid18}).
\end{itemize}
%=============================================

\subsection{Paper Structure and Notation}

The rest of this paper is organized as follows. 
Section~\ref{sec: Problem Formulation} describes the problem statement and preliminaries. Section~\ref{sec: Proposed Solution} derives the sum  PDF and the sum CDF of i.i.d. $\alpha$-$\mu$ RVs. Section \ref{sec: Equal Gain Combibing diversity Receivers} analyzes the performance of EGC and MRC diversity receivers.
Section~\ref{sec: Numerical Results} discusses representative numerical results. Finally, Section~\ref{sec: Conclusions} concludes the~paper.

In what follows, $\text{Pr}[\cdot]$ denotes probability; $f_{(\cdot)}(\cdot)$, PDF; $F_{(\cdot)}(\cdot)$, CDF; $\mathbb{E} \left[ \cdot \right]$, expectation; $\mathbb{V} \left[ \cdot  \right]$, variance; $\Laplace \left\{ \cdot \right\}$, the Laplace transform; $\Laplace^{-1}\left\{ \cdot \right\}$, the inverse Laplace transform; $\left| \cdot \right|$, absolute value; $*$, convolution; $j=\sqrt{-1}$, the imaginary unit; $\mathbb{N}_0$, the set of natural numbers including zero; $\mathbb{Z}^{+}$, the set of positive integer numbers; $\mathbb{C}$, the set of complex~numbers; and $\simeq$, ``asymptotically equal to around zero,'' i.e., ${h(x)\simeq g(x) \iff \lim\limits_{x\to 0} \, \frac{h (x)}{g (x)}=1}$;

\section{Problem Statement and Preliminaries}
\label{sec: Problem Formulation}
The problem tackled here is an age-old one, a challenge that has been the subject of a number of works by several researchers. More particularly, the question addressed is to find exact statistics, namely PDF, CDF, and applications thereof, for the sum of fading envelopes. We solve this problem for a rather general distribution: the $\alpha$-$\mu$ model.

Let $R$ be the sum of $L$ i.i.d. $\alpha$-$\mu$ RVs $R_n$, i.e., 
% \small
\label{sec:problem}
\begin{equation}\label{eq: sum}
    R= \sum _{n=1}^{L} R_n.
\end{equation}
\normalsize

A signal with envelope  $\{R_n\}_{n=1}^L$ modeled by the $\alpha$-$\mu$ distribution has a PDF given as~\cite{Yacoub07}
\begin{align}
    \label{eq: PDF Xn}
    f_{R_n}(r_n)=& \frac{\alpha  \mu ^{\mu } r_n^{\alpha  \mu -1}}{\Gamma (\mu ) \hat{r}^{\alpha  \mu }} \exp \left(-\mu  \left(\frac{r_n}{\hat{r}}\right)^{\alpha }\right), 
\end{align}
\normalsize
where $\alpha>0$ is a nonlinearity parameter, $\hat{r}\triangleq \sqrt[\alpha]{\mathbb{E} \left[ R_{n}^\alpha\right]}$ is the $\alpha$-root mean value, $\mu>0$ is the inverse of the normalized variance of $R_{n}^\alpha$, i.e.,
\begin{align}
    \label{}
    \mu= \frac{\mathbb{E}^2 \left[ R_{n}^\alpha\right] }{\mathbb{V} \left[R_{n}^\alpha \right]},
\end{align}
and  $\Gamma(\cdot)$ is the gamma function~\cite[eq. 6.1.1]{abramowitz72}.

The CDF of $R_n$ is given by \cite{Yacoub07}
\begin{align}
    \label{}
    F_{R_n}(r_n) = \frac{\gamma \left( \mu, \mu r_{ n}^{\alpha} / \hat{r}^{\alpha} \right)}{\Gamma(\mu) },
\end{align}
where $\gamma (\cdot,\cdot)$ is the lower incomplete gamma function~\cite[eq. (8.2.1)]{Olver10}.

The $\alpha$-$\mu$ distribution is a general and versatile fading model
that includes several other important small-scale fading models as special cases, such as one-sided
Gaussian, negative exponential, Rayleigh, Gamma, Weibull, and Nakagami-$m$ distributions. 
The solutions proposed next for the sum PDF and CDF are exact, simple, easy to compute, and unprecedented in the literature.

\section{Exact Sum Statistics}
\label{sec: Proposed Solution}

In this section, we derive exact, fast, and manageable solutions for the PDF and the CDF of the sum of i.i.d. $\alpha$-$\mu$ RVs. These solutions are presented in the following theorem and corollary.

\begin{theorem}
The PDF for the sum in \eqref{eq: sum} is given by
\begin{align}    \label{eq: Final PDF}
    f_{R}(r) = \left(\frac{\alpha  \mu ^{\mu }}{\Gamma (\mu ) \hat{r}^{\alpha  \mu }}\right)^L  \sum _{i=0}^{\infty } \frac{\delta_i r^{\alpha  i+\alpha  \mu  L-1}}{\Gamma (i \alpha +L \mu  \alpha )},
\end{align}
where the coefficients $\delta_i$ can be obtained by the following recursive formulas: 
\begin{subequations}
\label{eq: Coefficients}
\begin{align}
    \delta_0=& \Gamma (\alpha  \mu )^L \\
    \delta_i=& \frac{1 }{i \Gamma (\alpha  \mu )}\sum _{l=1}^i \frac{\delta_{i-l}(l L+l-i)  \Gamma (\alpha  (l+\mu )) \left(-\mu  \left(\frac{1}{\hat{r}}\right)^{\alpha }\right)^l}{l!}.
\end{align}
\end{subequations}
\end{theorem}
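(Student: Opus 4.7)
The plan is to work in the Laplace domain, where the $L$-fold convolution defining the PDF of $R$ reduces to the $L$-th power of a single transform. Concretely, I would start from the identity $f_R = f_{R_1}\ast\cdots\ast f_{R_L}$, whose Laplace transform is $\Laplace\{f_R\}(s) = \bigl(\Laplace\{f_{R_n}\}(s)\bigr)^L$. The obstacle is that the $\alpha$-$\mu$ PDF in \eqref{eq: PDF Xn} contains the nonstandard term $\exp(-\mu(r_n/\hat r)^\alpha)$, so I would bypass this by expanding the exponential in its Maclaurin series, yielding $f_{R_n}(r_n)$ as a series in powers $r_n^{\alpha(k+\mu)-1}$.

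Once the PDF is a power series in $r_n^{\alpha(k+\mu)-1}$, I can apply the Laplace transform termwise using the elementary rule $\Laplace\{r^{\beta-1}\} = \Gamma(\beta)/s^{\beta}$, producing a representation
\begin{equation*}
\Laplace\{f_{R_n}\}(s) = \frac{\alpha\mu^\mu}{\Gamma(\mu)\hat r^{\alpha\mu}}\,s^{-\alpha\mu}\sum_{k=0}^\infty a_k\, s^{-\alpha k},
\end{equation*}
with $a_k=(-\mu/\hat r^{\alpha})^k\Gamma(\alpha(k+\mu))/k!$. Raising to the $L$-th power gives $\Laplace\{f_R\}(s)$ as a common prefactor times $s^{-\alpha\mu L}\bigl(\sum_k a_k s^{-\alpha k}\bigr)^L$, so the task reduces to computing the coefficients $\delta_i$ of the power series $\bigl(\sum_k a_k x^k\bigr)^L$ with $x=s^{-\alpha}$.

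The core technical step, and the main obstacle, is deriving the recursion in \eqref{eq: Coefficients}. I would invoke the J.C.P. Miller recurrence for powers of a formal power series: differentiating $B(x)=A(x)^L$ gives the functional identity $A(x)B'(x)=L\,A'(x)B(x)$, and matching coefficients of $x^{i-1}$ on both sides yields
\begin{equation*}
i\,a_0\,\delta_i=\sum_{l=1}^{i}(lL+l-i)\,a_l\,\delta_{i-l},\qquad \delta_0=a_0^L.
\end{equation*}
Substituting $a_0=\Gamma(\alpha\mu)$ and the explicit form of $a_l$ recovers \eqref{eq: Coefficients} exactly. The only delicate issue is ensuring that the formal manipulation is legitimate: since $a_0=\Gamma(\alpha\mu)\neq 0$, the recurrence is well defined, and convergence of the series is guaranteed on any half-plane where $\Laplace\{f_{R_n}\}(s)$ is analytic with decaying magnitude, so the Laplace inversion below is justified.

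Finally, I would invert the Laplace transform termwise using $\Laplace^{-1}\{s^{-\alpha(i+\mu L)}\} = r^{\alpha i+\alpha\mu L-1}/\Gamma(\alpha i+\alpha\mu L)$, which directly produces \eqref{eq: Final PDF}. The resulting series has the same analytic structure as a single $\alpha$-$\mu$ PDF multiplied by a power series whose coefficients encode all dependence on $L$ through the recursion, which is precisely the source of the claimed $L$-independent computational complexity.
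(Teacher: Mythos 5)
Your proposal follows essentially the same route as the paper's proof: pass to the Laplace domain where the sum becomes an $L$-th power, expand $\Laplace\{f_{R_n}\}(s)$ as a series in $s^{-\alpha}$ with coefficients $a_k=(-\mu\hat r^{-\alpha})^k\Gamma(\alpha(k+\mu))/k!$, obtain the $\delta_i$ via the identity $A(x)B'(x)=LA'(x)B(x)$ (the paper's ``differential equation'' step is exactly your J.C.P.\ Miller recurrence), and invert term by term using $\Laplace^{-1}\{s^{-a}\}=r^{a-1}/\Gamma(a)$. The only difference is cosmetic --- you derive the single-variate series by a direct Maclaurin expansion of the exponential and termwise transformation, whereas the paper routes through a Meijer $G$/Mellin--Barnes contour representation and residue calculus --- and both treatments share the same formal-series caveat for $\alpha>1$ (where the intermediate series in $s^{-\alpha}$ is only asymptotic), with rigor ultimately restored by the absolute convergence of the final inverted series.
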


\begin{proof}
The corresponding proof can be found in Appendix \ref{app: PDF R} and relies on two fundamental principles of complex analysis and calculus, namely, 
complex integration and differential equations.

\end{proof}

The absolute convergence of \eqref{eq: Final PDF} is confirmed in Appendix \ref{app: Absolute Convergence}.
As a simple check, for $L=1$, \eqref{eq: Final PDF} reduces to the $\alpha$-$\mu$ distribution, as shown in Appendix \ref{app: singel alpha-mu}.
Furthermore, if the series in \eqref{eq: Final PDF} is truncated up to the $\mathcal{N}_{\text{T}}$ terms, then its associated truncation error can be obtained as
\begin{align}
    \label{eq: truncation def PDF}
    \epsilon_{f} =\left(\frac{\alpha  \mu ^{\mu }}{\Gamma (\mu ) \hat{r}^{\alpha  \mu }}\right)^L  \sum _{i=\mathcal{N}_{\text{T}}}^{\infty } \frac{\delta_i r^{\alpha  i+\alpha  \mu  L-1}}{\Gamma (i \alpha +L \mu  \alpha )}.
\end{align}
In Table \ref{tab: PDF Accuracy}, we show that few terms (between 12 and 15) were required to ensure an outstanding PDF accuracy of less than $10^{-13}$.

In practice, it is most useful to provide an upper bound for the truncation error in \eqref{eq: truncation def PDF}. 
However, because of the increasing or decreasing behavior of the coefficients $\delta_i$ (see Appendix \ref{app: Absolute Convergence}), it is more convenient to obtain two different upper bounds depending on the values of $\alpha$, i.e., one bound for $0<\alpha<1$, denoted by $\mathcal{B}_{f}^\dagger$, and another for $\alpha \geq1$, denoted by $\mathcal{B}_{f}^*$.

$\mathcal{B}_{f}^\dagger$ can be found in closed form as (see Appendix \ref{app: Truncation Bounds}) 
\begin{align}
    \label{eq: bound k<1}
    \nonumber \mathcal{B}_{f}^\dagger = & \, r^{-1} \left(\frac{\alpha  \mu ^{\mu } \Gamma (\alpha  \mu ) \left(\frac{r}{\hat{r}}\right)^{\alpha  \mu }}{\Gamma (\mu )}\right)^L \\
    \nonumber & \times \Bigg[ E_{\alpha ,\alpha  \mu  L}\left(2 \mu  L \left(\frac{r}{\hat{r}}\right)^{\alpha } \Gamma (\mu  \alpha +\alpha )\right)  \\
    & \ \ \ \ -\sum _{i=0}^{\mathcal{N}_{\text{T}}-1} \frac{\left(2 \mu  L \left(\frac{r}{\hat{r}}\right)^{\alpha } \Gamma (\mu  \alpha +\alpha )\right)^i}{\Gamma (i \alpha +L \mu  \alpha )}  \Bigg].
\end{align}
Notice in \eqref{eq: bound k<1} that the sum within the brackets is a lower truncated version of the Mittag-Leffler function on the left.
Therefore, when $\mathcal{N}_{\text{T}} \to \infty$, then the Mittag-Leffler function and the sum will cancel each other out and, consequently, $\mathcal{B}_{f}^\dagger$ will go to zero.

$\mathcal{B}_{f}^*$ can also be obtained in closed form as (see Appendix~\ref{app: Truncation Bounds}) 
\begin{align}
    \label{eq: bound k>=1}
    \nonumber \mathcal{B}_{f}^* = & \frac{2 L}{r} \left(\frac{\alpha  \mu ^{\mu } \Gamma (\alpha  \mu ) \left(\frac{r}{\hat{r}}\right)^{\alpha  \mu }}{\Gamma (\mu )}\right)^L \\
    & \times \frac{\exp \left(\mu  \left(\frac{r}{\hat{r}}\right)^{\alpha }\right) \gamma \left(\mathcal{N}_{\text{T}},\mu  \left(\frac{r}{\hat{r}}\right)^{\alpha }\right)}{\Gamma \left(\mathcal{N}_{\text{T}}\right)}.
\end{align}
Notice in \eqref{eq: bound k>=1} that since $\underset{\mathcal{N}_{\text{T}} \to \infty }{\text{lim}}\frac{\gamma (\mathcal{N}_{\text{T}},a)}{\Gamma (\mathcal{N}_{\text{T}})}=0$, then $\mathcal{B}_{f}^*$ approaches zero as $\mathcal{N}_{\text{T}}$ goes to infinity.

It is important to highlight that \eqref{eq: bound k<1} and \eqref{eq: bound k>=1} can be used to establish a sufficient number of terms to guarantee a required PDF accuracy.
For example, in Table \ref{tab: PDF Accuracy}, we show that 15 terms are sufficient (i.e., they can be less) to ensure a PDF accuracy of $10^{-11}$.

%=================================================
\begin{table*}[t!]
\centering
\caption{PDF Accuracy Analysis}
\label{tab: PDF Accuracy}
\begin{tabular}{l c c c c}
\toprule
\multicolumn{1}{c}{PDF parameters} & $f_R(r)$ & $\mathcal{N}_{\text{T}}$ & Truncation error, $\epsilon_f$  & Truncation bounds, $\mathcal{B}_{f}^\dagger$ or $\mathcal{B}_{f}^*$  \\ \hline
$\alpha=0.8$, $\mu=0.2$, $r=2$, $\hat{r}=5$, $L=3$ & $0.06218$ & 13 & $1.94428 \times 10^{-26}$ & $5.44576 \times 10^{-11}$  \\ \hline
$\alpha=1.2$, $\mu=0.5$, $r=2$, $\hat{r}=1$, $L=3$ & $0.24861$ & 15 & $4.62796 \times 10^{-14}$ & $2.49407 \times 10^{-11}$  \\ \hline
$\alpha=0.7$, $\mu=0.1$, $r=2$, $\hat{r}=7$, $L=4$ & $0.04389$  & $13$ & $2.46361 \times 10^{-31}$ & $1.15486 \times 10^{-11}$ \\ \hline
$\alpha=1.5$, $\mu=0.7$, $r=2$, $\hat{r}=2$, $L=4$ & $0.02492$ & 12 & $3.14923 \times 10^{-16}$ & $7.20537 \times 10^{-11}$  \\ \hline
$\alpha=0.9$, $\mu=0.7$, $r=3$, $\hat{r}=10$, $L=5$ & $0.00067$ & 15 & $2.81411 \times 10^{-26}$ & $2.13662 \times 10^{-11}$  \\ \hline
$\alpha=1.7$, $\mu=1.0$, $r=3$, $\hat{r}=3$, $L=5$ & $0.00013$ & 15 & $1.27159 \times 10^{-22}$ & $2.39054 \times 10^{-11}$  \\ 
\bottomrule  
\end{tabular}
\end{table*}
%=================================================
%=================================================
\begin{table*}[t!]
\centering
\caption{CDF Accuracy Analysis}
\label{tab: CDF Accuracy}
\begin{tabular}{l c c c c}
\toprule
\multicolumn{1}{c}{CDF parameters} & $F_R(r)$ & $\mathcal{N}_{\text{T}}$ & Truncation error, $\epsilon_F$  & Truncation bounds, $\mathcal{B}_{F}^\dagger$ or $\mathcal{B}_{F}^*$  \\ \hline
$\alpha=0.8$, $\mu=0.2$, $r=2$, $\hat{r}=5$, $L=3$ & $0.27666$ & 13 & $3.33080 \times 10^{-27}$ & $9.94579 \times 10^{-11}$  \\ \hline
$\alpha=1.2$, $\mu=0.5$, $r=2$, $\hat{r}=1$, $L=3$ & $0.42717$ & 15 & $4.42165 \times 10^{-15}$ & $4.98814 \times 10^{-11}$  \\ \hline
$\alpha=0.7$, $\mu=0.1$, $r=2$, $\hat{r}=7$, $L=4$ & $0.32405$  & $11$ & $1.41096 \times 10^{-29}$ & $2.98271 \times 10^{-11}$ \\ \hline
$\alpha=1.5$, $\mu=0.7$, $r=2$, $\hat{r}=2$, $L=4$ & $0.01330$ & 13 & $9.61472 \times 10^{-19}$ & $7.72864 \times 10^{-11}$  \\ \hline
$\alpha=0.9$, $\mu=0.7$, $r=3$, $\hat{r}=10$, $L=5$ & $0.00068$ & 14 & $3.02419 \times 10^{-25}$ & $2.30528 \times 10^{-11}$  \\ \hline
$\alpha=1.7$, $\mu=1.0$, $r=3$, $\hat{r}=3$, $L=5$ & $0.00005$ & 15 & $1.07039 \times 10^{-23}$ & $7.17163 \times 10^{-11}$  \\  
\bottomrule  
\end{tabular}
\end{table*}
%=================================================

\begin{corollary}
The CDF for the sum in \eqref{eq: sum} is given by
\begin{align}
    \label{eq: Final CDF}
    F_{R}(r) = \left(\frac{\alpha  \mu ^{\mu }}{\Gamma (\mu ) \hat{r}^{\alpha  \mu }}\right)^L \sum _{i=0}^{\infty } \frac{\delta_i r^{\alpha  i+\alpha  \mu  L}}{\Gamma (i \alpha +L \mu  \alpha +1)},
\end{align}
where the coefficients $\delta_i$ are given by \eqref{eq: Coefficients}.
\end{corollary}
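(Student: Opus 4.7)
The plan is to obtain the CDF directly from the PDF in \eqref{eq: Final PDF} by term-by-term integration, exploiting the identity $F_R(r) = \int_0^r f_R(t)\,dt$. The first step is to justify swapping the integral with the infinite sum. Since the absolute convergence of the series in \eqref{eq: Final PDF} is established in Appendix \ref{app: Absolute Convergence}, and each integrand is a nonnegative monomial on the compact interval $[0,r]$, Fubini--Tonelli immediately permits the interchange, so I do not anticipate any subtlety there.

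Once the integral and sum are swapped, what remains is the elementary computation
$$\int_0^r t^{\alpha i + \alpha\mu L - 1}\,dt = \frac{r^{\alpha i + \alpha\mu L}}{\alpha i + \alpha\mu L},$$
valid because $\alpha\mu L > 0$ guarantees integrability at the origin for every $i \geq 0$. Substituting back into the series and invoking the gamma-function recurrence $z\,\Gamma(z) = \Gamma(z+1)$ with $z = \alpha i + \alpha\mu L$ converts the denominator $(\alpha i + \alpha\mu L)\,\Gamma(\alpha i + L\mu\alpha)$ into $\Gamma(\alpha i + L\mu\alpha + 1)$, which is precisely the denominator appearing in \eqref{eq: Final CDF}. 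The coefficients $\delta_i$ and the prefactor $\bigl(\alpha\mu^{\mu}/(\Gamma(\mu)\hat r^{\alpha\mu})\bigr)^{L}$ are untouched by the integration, so the resulting series matches \eqref{eq: Final CDF} term by term.

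Because all the analytical heavy lifting was done in proving the Theorem and in the absolute-convergence argument of Appendix \ref{app: Absolute Convergence}, there is no real obstacle here; the hardest line of the argument is simply citing Fubini--Tonelli. As a sanity check one can verify $F_R(0) = 0$ (immediate, since every exponent $\alpha i + \alpha\mu L$ is strictly positive) and, at least formally, $\lim_{r\to\infty}F_R(r) = 1$, which follows from the Theorem together with $f_R$ being a valid density. These checks are not required for the proof but confirm the consistency of \eqref{eq: Final CDF}.
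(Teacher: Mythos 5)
Your proof is correct and follows essentially the same route as the paper's Appendix~E: term-by-term integration of \eqref{eq: Final PDF} over $[0,r]$, justified by the absolute convergence established in Appendix~\ref{app: Absolute Convergence}, followed by the identity $z\,\Gamma(z)=\Gamma(z+1)$. One small imprecision: the coefficients $\delta_i$ alternate in sign, so the summands are not all nonnegative and the interchange rests on Fubini via absolute integrability (which you do invoke) rather than on Tonelli for nonnegative terms.
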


\begin{proof}
See Appendix \ref{app: CDF R}.
\end{proof}

To prove the absolute convergence of \eqref{eq: Final CDF}, one can use the same approach as the one shown in~Appendix~\ref{app: Absolute Convergence}. 
Moreover, if the series in \eqref{eq: Final CDF} is truncated up to the $\mathcal{N}_{\text{T}}$ terms, then its associated truncation error can be expressed as
\begin{align}
    \label{eq: truncation def CDF}
    \epsilon_{F} =\left(\frac{\alpha  \mu ^{\mu }}{\Gamma (\mu ) \hat{r}^{\alpha  \mu }}\right)^L  \sum _{i=\mathcal{N}_{\text{T}}}^{\infty } \frac{\delta_i r^{\alpha  i+\alpha  \mu  L}}{\Gamma (i \alpha +L \mu  \alpha +1)}.
\end{align}
In Table \ref{tab: CDF Accuracy}, we show that no more that 15 terms were needed to ensure a remarkably CDF accuracy of less than $10^{-14}$.

Similarly to \eqref{eq: bound k<1} and \eqref{eq: bound k>=1}, the truncation error in \eqref{eq: truncation def CDF} error can be bounded considering two different bounds, one bound for $0<\alpha<1$, denoted by $\mathcal{B}_{F}^\dagger$, and another for $\alpha \geq1$, denoted by $\mathcal{B}_{F}^*$.
Thus, utilizing the same mathematical procedure as in Appendix \ref{app: Truncation Bounds}, $\mathcal{B}_{F}^\dagger$ and $\mathcal{B}_{F}^*$ can be obtained respectively as
\begin{align}
    \label{eq: bound k<1 CDF}
    \nonumber \mathcal{B}_{F}^\dagger = & \,  \left(\frac{\alpha  \mu ^{\mu } \Gamma (\alpha  \mu ) \left(\frac{r}{\hat{r}}\right)^{\alpha  \mu }}{\Gamma (\mu )}\right)^L \\
    \nonumber & \times \Bigg[ E_{\alpha ,\alpha  \mu  L+1}\left(2 \mu  L \left(\frac{r}{\hat{r}}\right)^{\alpha } \Gamma (\mu  \alpha +\alpha )\right)  \\
    & \ \ \ \ -\sum _{i=0}^{\mathcal{N}_{\text{T}}-1} \frac{\left(2 \mu  L \left(\frac{r}{\hat{r}}\right)^{\alpha } \Gamma (\mu  \alpha +\alpha )\right)^i}{\Gamma (i \alpha +L \mu  \alpha +1)}  \Bigg] \\ \label{eq: bound k>=1 CDF}
    \nonumber \mathcal{B}_{F}^* = &  2 L \left(\frac{\alpha  \mu ^{\mu } \Gamma (\alpha  \mu ) \left(\frac{r}{\hat{r}}\right)^{\alpha  \mu }}{\Gamma (\mu )}\right)^L \\
    & \times \frac{\exp \left(\mu  \left(\frac{r}{\hat{r}}\right)^{\alpha }\right) \gamma \left(\mathcal{N}_{\text{T}},\mu  \left(\frac{r}{\hat{r}}\right)^{\alpha }\right)}{\Gamma \left(\mathcal{N}_{\text{T}}\right)}.
\end{align}
Applying the same reasoning as in \eqref{eq: bound k<1} and \eqref{eq: bound k>=1}, it can be shown that \eqref{eq: bound k<1 CDF} and \eqref{eq: bound k>=1 CDF} also go to zero as $\mathcal{N}_{\text{T}}$ goes to infinity.
Also, notice \eqref{eq: bound k<1 CDF} and \eqref{eq: bound k>=1 CDF} can be employed to establish a sufficient number of terms given a required CDF accuracy.
For instance, in Table \ref{tab: CDF Accuracy}, we show that 15 terms are sufficient (i.e., they can be less) to guarantee a CDF accuracy of $10^{-11}$.

\section{Applications  to EGC and MRC Receivers}
\label{sec: Equal Gain Combibing diversity Receivers}

As an application example, we now use our findings to investigate the performance of EGC and MRC diversity receivers operating over $\alpha$-$\mu$ fading channels.

The instantaneous signal-to-noise ratio (SNR) in an $L$-branch pre-detection EGC diversity system with equal noise levels is given~by
\begin{align}
    \label{eq: SNR def}
    \Psi_{\text{EGC}}=\frac{E_s}{L N_0} \left(\sum _{n=1}^{L} R_n \right)^2=\frac{E_s}{L N_0} R_{\text{EGC}}^2,
\end{align}
where $\{R_n\}_{n=1}^L$ is the set of i.i.d. $\alpha$-$\mu$ RVs, $E_s$ is the mean energy of the transmitted symbol, $N_0$ is the power spectral density of the noise, and $E_s/N_0$ is the mean SNR per symbol~\cite{bookAlouini}.

On the other hand, the instantaneous SNR in an $L$-branch pre-detection MRC diversity system with equal noise levels is defined as
\begin{align}
    \label{eq: SNR def MRC}
    \Psi_{\text{MRC}}=\frac{E_s}{L N_0} \left( \sum _{n=1}^L R_n^2 \right)= \frac{E_s}{L N_0} R_{\text{MRC}}.
\end{align}

To determine the statistics of $\Psi_{\text{EGC}}$ and $\Psi_{\text{MRC}}$, we first proceed to find the PDF of $R_{\text{EGC}}$ and $R_{\text{MRC}}$.

The PDF of $R_{\text{EGC}}$ and $R_{\text{MRC}}$ can be easily obtained from \eqref{eq: Final PDF} after a conventional transformation of variables as
\begin{align}    \label{eq: PDF EGC MRC}
    f_{R_\nu}(r) = \left(\frac{\alpha  \mu ^{\mu }}{\vartheta_\nu \Gamma (\mu ) \hat{r}^{\alpha  \mu }}\right)^L  \sum _{i=0}^{\infty } \frac{\delta_i r^{\frac{\alpha  i+\alpha  \mu  L}{\vartheta_\nu}  -1}}{\Gamma ( \frac{i \alpha +L \mu  \alpha}{\vartheta_\nu})},
\end{align}
in which $\nu \in \left\{ \text{EGC}, \text{MRC}\right\}$, 
$\vartheta_{{\text{EGC}}} =1$, and $\vartheta_{{\text{MRC}}} =2$.

From \eqref{eq: PDF EGC MRC} and after another straightforward transformation of variables, the PDF of $\Psi_{\nu}$ can obtained as
\begin{align}
    \label{eq: PDF psi nu}
    \nonumber f_{\Psi_{\nu}}(\psi)= & \frac{\vartheta_\nu L N_0}{2 E_s} \left(\frac{\alpha  \mu ^{\mu }}{\vartheta_\nu \Gamma (\mu ) \hat{r}^{\alpha  \mu }}\right)^L \\
    & \times \sum _{i=0}^{\infty } \frac{\delta_i \left(\frac{L N_0 \psi }{E_s}\right)^{\frac{1}{2} (\alpha  i+\alpha  \mu  L)-1}}{\Gamma ( \frac{i \alpha +L \mu  \alpha }{\vartheta_\nu })}.
\end{align}
 
Now, from \eqref{eq: PDF psi nu}, the CDF of $\Psi_{\nu}$ can be calculated as
\begin{align}
    \label{eq: CDF Phi}
    \nonumber F_{\Psi_\nu}(\psi) \triangleq & \int_{0}^{\psi}  f_{\Psi_\nu}(u) \text{d}u \\
    =&  \left(\frac{\alpha  \mu ^{\mu }}{\vartheta_\nu \Gamma (\mu ) \hat{r}^{\alpha  \mu }}\right)^L \sum _{i=0}^{\infty } \frac{\delta_i \left(\frac{N_0 L \psi }{E_s}\right)^{\frac{1}{2} (\alpha  i+\alpha  \mu  L)}}{ \Gamma (\frac{i \alpha +L \mu  \alpha }{\vartheta_\nu}+1  )}.
\end{align}

\subsubsection{Average Symbol Error Rate}
\label{sec: Average SER}
The ASER for a pre-detection EGC/MRC receiver is expressed as~\cite[eq. (9.61)]{bookAlouini}
\begin{align}
    \label{eq: Pe def}
    P_{\textmd{e}_\nu}=\int^{\infty}_{0} Q\left( \sqrt{2 \mathcal{G} \psi}\right)  f_{\Psi_\nu}(\psi) \text{d} \psi,
\end{align}
where $Q\left(\cdot \right)$ is
the Gaussian $Q$-function~\cite[eq. (4.1)]{bookAlouini}, and $\mathcal{G}$ is a modulation-dependent parameter such that $\mathcal{G}=1$ for binary phase-shift keying (BPSK), $\mathcal{G}=1/2$ for orthogonal BPSK, and $\mathcal{G}=0.715$ for BPSK with minimum correlation~\cite{bookProakis95}.
Other values of $\mathcal{G}$ can also be used for other modulation schemes~\cite{bookAlouini}.

Using the alternative representation of the Gaussian complementary cumulative function~\cite[eq. (4.2)]{bookAlouini}, we can rewrite \eqref{eq: Pe def} as follows:
\begin{align}
    \label{eq: Pe def 2}
    P_{\textmd{e}_\nu}=\frac{1}{2}\int^{\infty}_{0} \text{erfc}\left( r \sqrt{\frac{\mathcal{G} E_s}{L N_0}}\right)  f_{R_\nu}(r) \text{d} r,
\end{align}
where $\text{erfc} \left( \cdot \right)$ is the complementary error function~\cite[eq. (06.27.07.0001.01)]{Mathematica}. Substituting \eqref{eq: Final PDF} into \eqref{eq: Pe def 2}, and changing the order of integration, we obtain
\begin{align}
    \label{}
    \nonumber P_{\textmd{e}_\nu}= &\frac{1}{2} \left(\frac{\alpha  \mu ^{\mu }}{\vartheta_\nu \Gamma (\mu ) \hat{r}^{\alpha  \mu }}\right)^L \sum _{i=0}^{\infty } \frac{\delta_i}{\Gamma ( \frac{i \alpha +L \mu  \alpha }{\vartheta_\nu} )} \\
    & \times \int_0^{\infty } \text{erfc}\left(r \sqrt{\frac{E_s \mathcal{G}}{L N_0}}\right) r^{ \frac{\alpha  i+\alpha  \mu  L}{\vartheta_\nu} -1} \, \text{d}r.
\end{align}  
The previous integral can be solved with the aid of~\cite[eq. (1.5.2.1)]{prudnikov98Vol1}. Then, after simplifications, we finally attain
\begin{align}
    \label{eq: Pe final}
    P_{\textmd{e}_\nu}= &\frac{1}{2 } \left(\frac{\alpha  \mu ^{\mu }}{\vartheta_\nu \Gamma (\mu ) \hat{r}^{\alpha  \mu }}\right)^L \sum _{i=0}^{\infty } \frac{\delta_i \left(\frac{4 E_s \mathcal{G}}{L N_0 }\right)^{-\frac{\alpha  i+\alpha  \mu  L}{2 \vartheta_\nu} }}{ \Gamma ( \frac{i \alpha +L \mu  \alpha}{2 \vartheta_\nu}+1  )}.
\end{align}

An important region for the performance analysis of communications systems is the high-SNR regime. To address this, we take the most significant term in the summation of \eqref{eq: Pe final}, which is obtained when $i=0$. Thus, considering only the first term, the asymptotic ASER can be obtained~as
\begin{align}
    \label{eq: Pe Asymptotic}
    P_{\textmd{e}_\nu}  \simeq & \left( \mathcal{C}_{\textmd{e}_\nu} \frac{E_s}{N_0}  \right)^{-\mathcal{D}_{\textmd{e}_\nu}},
\end{align}
where $\mathcal{D}_{\textmd{e}_\nu} =(\alpha  \mu  L)/2 \vartheta_\nu$ is the diversity gain and
\begin{align}
    \label{eq: cod gain ABER} \mathcal{C}_{\textmd{e}_\nu}  = \frac{4 \mathcal{G}}{L} \left(\frac{\left(\frac{\alpha  \mu ^{\mu } \Gamma (\alpha  \mu )}{\Gamma (\mu ) \vartheta _{\nu } \hat{r}^{\alpha  \mu }}\right)^L}{2 \, \Gamma \left(\frac{L \alpha  \mu }{2 \vartheta _{\nu }}+1\right)}\right)^{- \frac{2 \vartheta _{\nu }}{\alpha  \mu  L}}
\end{align}
is coding gain for the ASER.

\subsubsection{Outage Probability}
\label{sec: Outage Probability}
The OP is the probability that the instantaneous SNR falls below a certain
threshold, $\gamma_{\text{out}}$, i.e.,
\begin{align}
    \label{eq: OP}
    P_{\text{out}_\nu}  \triangleq \text{Pr} \left[ \Psi_\nu \leq \gamma_{\text{out}} \right] = \int_{0}^{\gamma_{\text{out}}}  f_{\Psi_\nu}(\psi) \text{d}\psi.
\end{align}

Since the OP is the CDF of the instantaneous SNR evaluated at $\gamma_{\text{out}}$, we have from \eqref{eq: CDF Phi}
\begin{align}
    \label{eq: OP 2}
    \nonumber P_{\text{out}_\nu} \triangleq & F_{\Psi_\nu}(\gamma_{\text{out}})\\
    =&  \left(\frac{\alpha  \mu ^{\mu }}{\vartheta _{\nu } \Gamma (\mu ) \hat{r}^{\alpha  \mu }}\right)^L \sum _{i=0}^{\infty } \frac{\delta_i \left(\frac{ E_s}{N_0 L \gamma_{\text{out}} }\right)^{-\frac{1}{2} (\alpha  i+\alpha  \mu  L)}}{ \Gamma (\frac{\alpha  i+\alpha  \mu  L}{\vartheta _{\nu }} +1)}.
\end{align}

Similarly, as for the ASEP, we address the high-SNR regime for the OP by taking the first term in the summation of \eqref{eq: OP 2}. Thus, considering only this term, the asymptotic OP can be attained as 
\begin{align}
    \label{eq: Asymp OP}
    P_{\text{out}_\nu}  \simeq & \left( \mathcal{C}_{\text{out}_\nu} \frac{E_s}{N_0}  \right)^{-\mathcal{D}_{\text{out}_\nu}},
\end{align}
where $\mathcal{D}_{\text{out}_\nu} =(\alpha  \mu  L)/2$ is the diversity gain and
\begin{align}
    \label{eq: cod gain OP} \mathcal{C}_{\text{out}_\nu}  = \frac{1}{L \gamma _{\text{out}}} \left(\Gamma \left(\frac{L \alpha  \mu }{\vartheta _{\nu }}+1\right) \left(\frac{\alpha  \mu ^{\mu } \Gamma (\alpha  \mu )}{\Gamma (\mu ) \vartheta _{\nu } \hat{r}^{\alpha  \mu }}\right)^L\right)^{-\frac{2}{\alpha  \mu  L}}
\end{align}
is coding gain for the OP.

% %=================================================
\begin{figure}[t!]
\begin{center}
\includegraphics[trim={0cm 0cm 0cm 0cm},clip,scale=0.435]{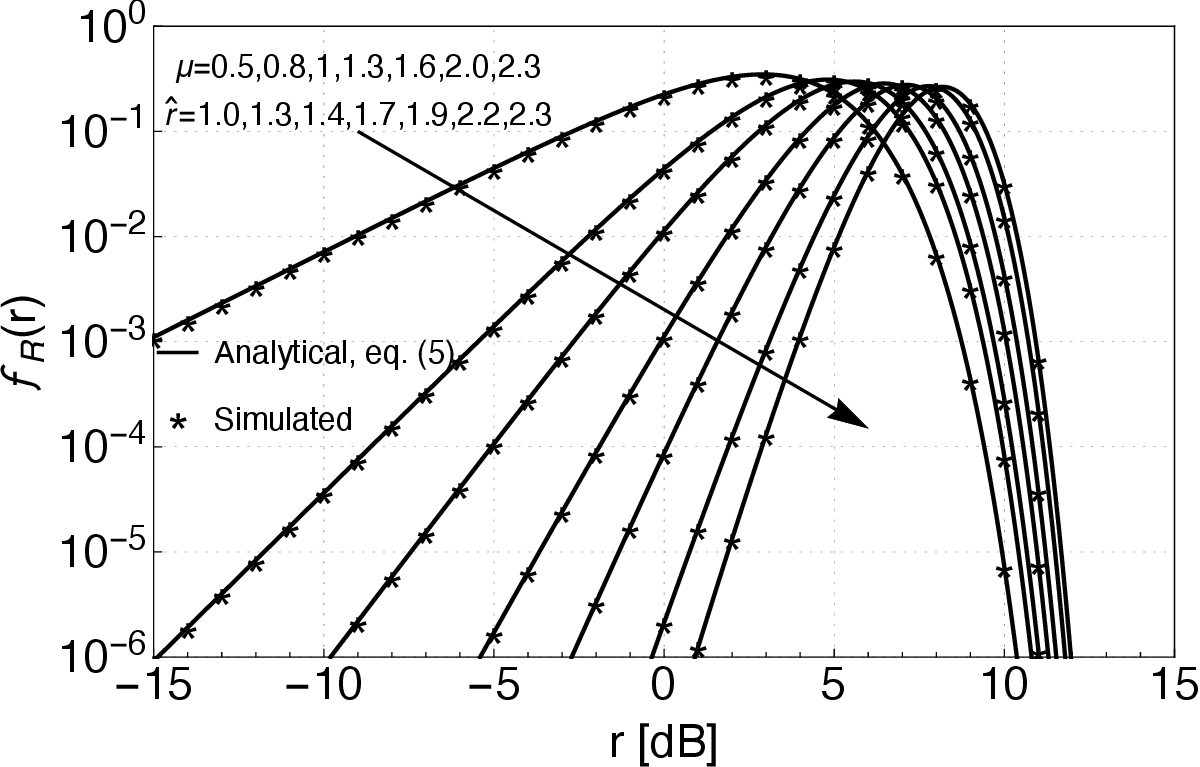}
\vspace{0cm}
\caption{PDF of $R$ for $\alpha=1.7$, $L=3$, and a range of values of $\mu$ and $\hat{r}$.}
\label{fig: PDF Z}
\end{center} 
\end{figure}
%=========================================================
%=========================================================
\begin{figure}[t!]
\begin{center}
\includegraphics[trim={0cm 0cm 0cm 0cm},clip,scale=0.435]{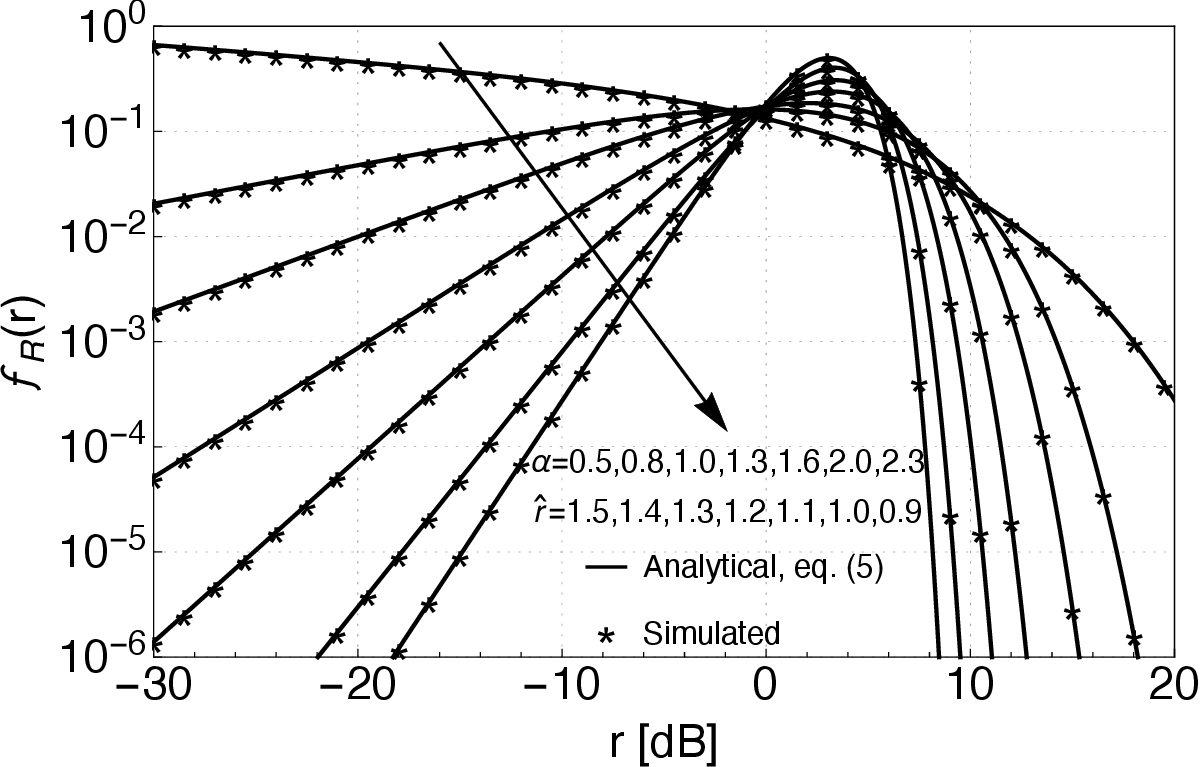}
\vspace{0cm}
\caption{PDF of $R$ for $\mu=1.7$, $L=3$, and a range of values of $\alpha$ and $\hat{r}$.}
\label{fig: PDF2 Z}
\end{center} 
\end{figure}
%=========================================================
%=========================================================
\begin{figure}[t!]
\begin{center}
\includegraphics[trim={0cm 0cm 0cm 0cm},clip,scale=0.435]{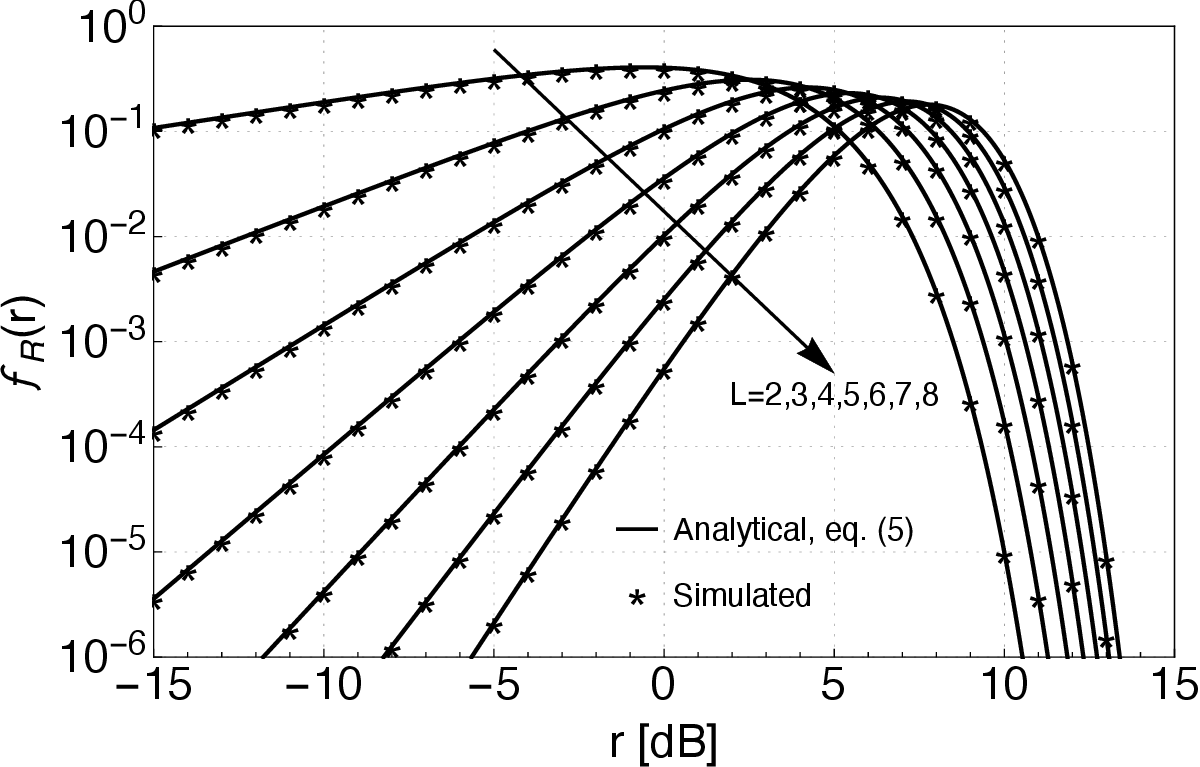}
\vspace{0cm}
\caption{PDF of $R$ for $\alpha=0.5$, $\mu=1.5$, $\hat{r}=1$, and a range of values of $L$.}
\label{fig: PDF3 Z}
\end{center} 
\end{figure}
%=========================================================
%=========================================================
\begin{figure}[t]
\begin{center}
\includegraphics[trim={0cm 0cm 0cm 0cm},clip,scale=0.435]{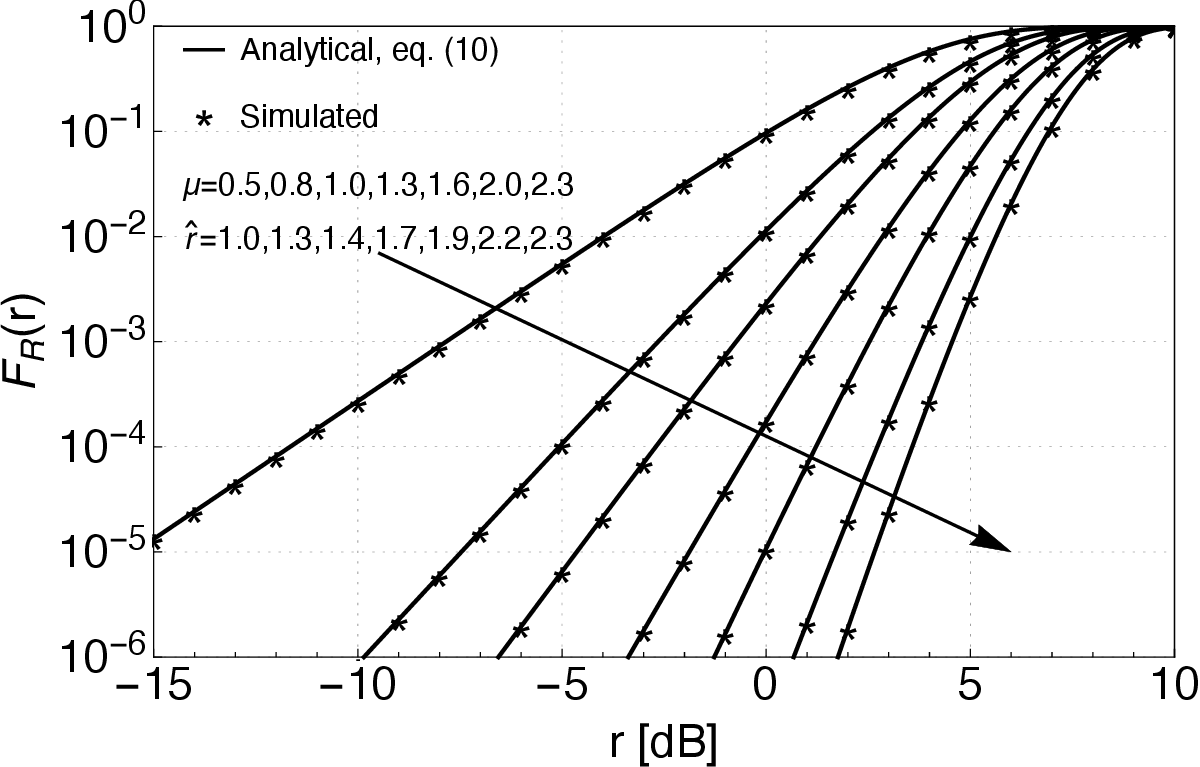}
\vspace{0cm}
\caption{CDF of $R$ for $\alpha=1.7$, $L=3$, and a range of values of $\mu$ and $\hat{r}$.}
\label{fig: CDF Z}
\end{center} 
\end{figure}
%=========================================================
%=========================================================
\begin{figure}[t!]
\begin{center}
\includegraphics[trim={0cm 0cm 0cm 0cm},clip,scale=0.435]{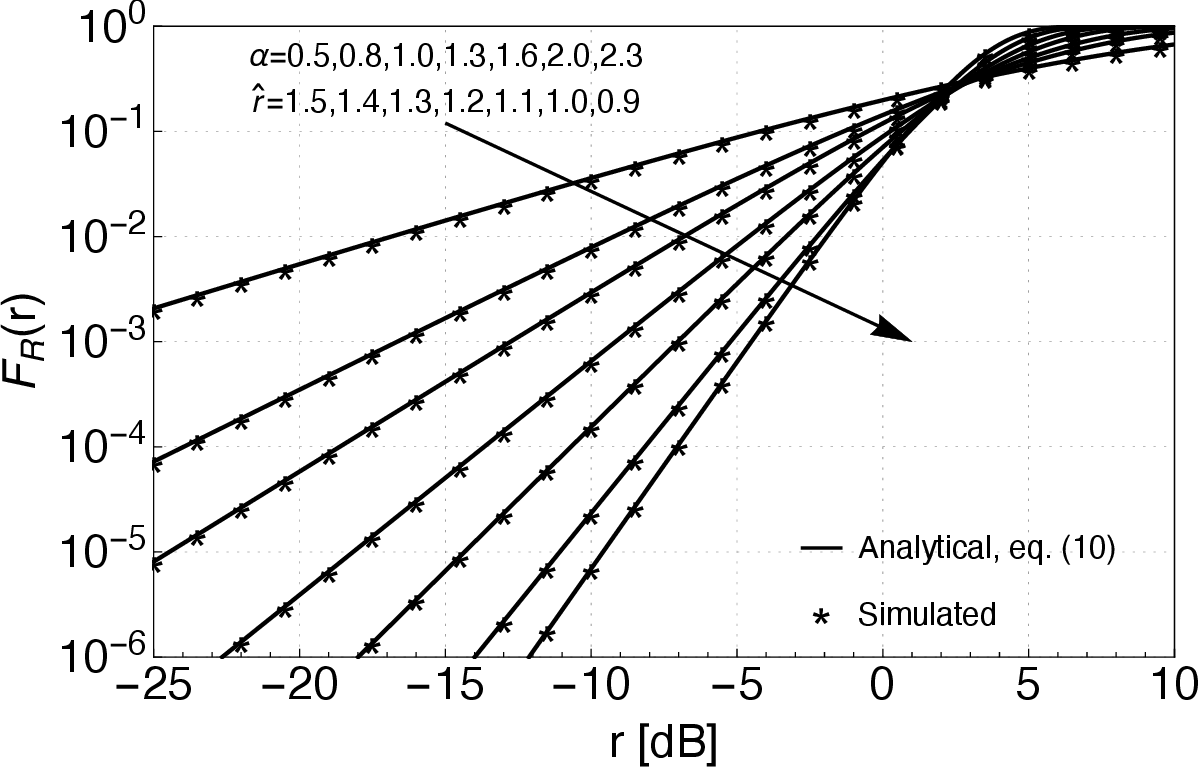}
\vspace{0cm}
\caption{CDF of $R$ for $\mu=1.7$, $L=3$, and a range of values of $\alpha$ and $\hat{r}$.}
\label{fig: CDF2 Z}
\end{center} 
\end{figure}
%=========================================================
%=========================================================
\begin{figure}[t!]
\begin{center}
\includegraphics[trim={0cm 0cm 0cm 0cm},clip,scale=0.435]{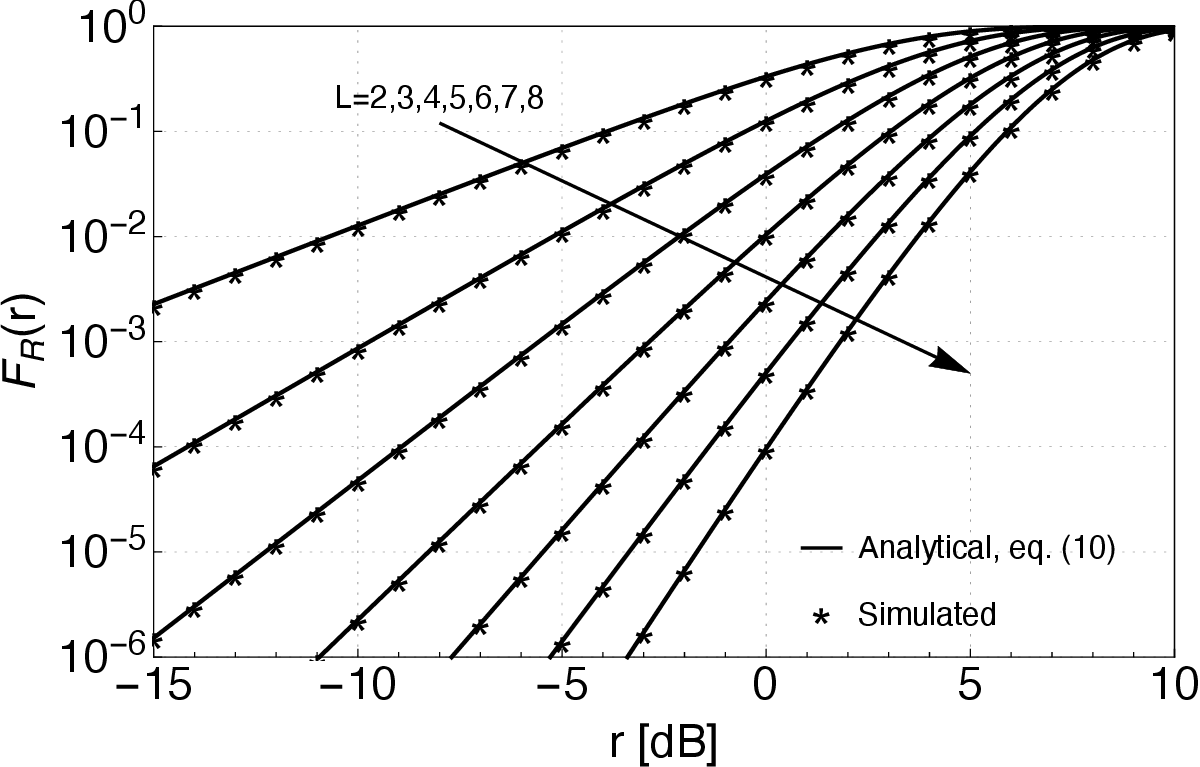}
\vspace{0cm}
\caption{CDF of $R$ for $\alpha=0.5$, $\mu=1.5$, $\hat{r}=1$, and a range of values of $L$.}
\label{fig: CDF3 Z}
\end{center} 
\end{figure}
%=========================================================

It is important to highlight that the exact and asymptotic solutions presented here for the ASER and OP are also original contributions. 
In fact, our expressions \eqref{eq: Pe final}, \eqref{eq: Pe Asymptotic}, \eqref{eq: OP 2}, and \eqref{eq: Asymp OP} are much simpler and enjoy a lower computational burden than those in \cite[eq. (28)]{Costa08aplha}, \cite[eq. (30)]{Payami21}, \cite[eq. (31)]{Payami21}, \cite[Table VII]{ElAyadi14}, and \cite[eq. (13)]{BenIssaid18}.

\section{Numerical Results and Discussion}
\label{sec: Numerical Results}

Now, we corroborate our derived expressions using numerical integration.\footnote{The PDF and the CDF of the sum $R$ were obtained by numerically evaluating the multi-fold Brennan's integral~\cite{Brennan59}.}
Additionally, we illustrate the efficiency of \eqref{eq: Final PDF} as compared to the state-of-the-art exact solutions for the sum PDF of Nakagami-$m$ \cite[eq. (4)]{Rahman11} and $\alpha$-$\mu$ \cite[eq. (3)]{Kong2021} RVs.
Since those works do not provide any exact expressions for the corresponding sum CDF, we only compare \eqref{eq: Final CDF} with the solution obtained via Brennan's integral.

Considering arbitrary values of $\alpha$, $\mu$, $\hat{r}$, and $L$, Figs. \ref{fig: PDF Z}--\ref{fig: PDF3 Z} show the analytical and simulated PDFs of $R$, while Figs. \ref{fig: CDF Z}--\ref{fig: CDF3 Z} show the  analytical and simulated CDFs of $R$. %
The values of the parameters $\alpha$, $\mu$, and $\hat{r}$ were chosen to illustrate a variety of shapes that the PDFs and CDFs may bear.
Note how well our derived expressions match the simulation results, thereby validating our findings.
%=================================================
\begin{table*}[h!]
\centering
\caption{Efficiency of~\eqref{eq: Final PDF} as compared to \cite[eq. (4)]{Rahman11} (Relative error $<10^{-6}$)}
\label{tab: Table2}
\begin{tabular}{c c c c c}
\toprule
Parameter settings & $f_R (r)$ & Elapsed time for~\eqref{eq: Final PDF} [s]  & Elapsed time for \cite[eq. (4)]{Rahman11} [s] & Time saving [\%]\\ \hline
$\alpha=2$, $\mu=1$, $\hat{r}=1$, $L=3$, $r=5$ & $0.0054$ & 0.0587 &  3.2286 & 98.1807 \\ \hline
$\alpha=2$, $\mu=1$, $\hat{r}=1$, $L=4$, $r=5$ & $0.1222$  & 0.0533 & 3.6523 & 98.5392 \\ \hline
$\alpha=2$, $\mu=2$, $\hat{r}=1$, $L=5$, $r=5$ & $0.3079$ & 0.0456 & 22.111 & 99.7938 \\ \hline
$\alpha=2$, $\mu=2$, $\hat{r}=1$, $L=6$, $r=5$ & $0.3475$ & 0.0316 & 17.112 & 99.8148 \\ \hline
$\alpha=2$, $\mu=3$, $\hat{r}=1$, $L=7$, $r=5$ & $0.5572$ & 0.0338 & 52.963 & 99.9361 \\ 
\bottomrule  
\end{tabular}
\end{table*}
%=================================================
%=================================================
\begin{table*}[t!]
\centering
\caption{Efficiency of~\eqref{eq: Final PDF} as compared to \cite[eq. (3)]{Kong2021} (Relative error $<10^{-6}$)}
\label{tab: Table3}
\begin{tabular}{c c c c c}
\toprule
Parameter settings & $f_R (r)$ & Elapsed time for~\eqref{eq: Final PDF} [s] & Elapsed time  for \cite[eq. (3)]{Kong2021} [s] & Time saving [\%] \\ \hline
$\alpha=0.5$, $\mu=2.5$, $\hat{r}=5$, $L=3$, $r=5$ & 0.0306 & 0.0549 & 2.7704 & 98.0182 \\ \hline
$\alpha=1.0$, $\mu=2.5$, $\hat{r}=5$, $L=4$, $r=15$ & 0.0572 & 0.0389 & 3.5229 & 98.8952 \\ \hline
$\alpha=1.5$, $\mu=2.5$, $\hat{r}=5$, $L=5$, $r=15$ & 0.0111 & 0.0579 & 18.866 & 99.6931 \\ \hline
$\alpha=2.0$, $\mu=2.5$, $\hat{r}=5$, $L=6$, $r=20$ & 0.0069 & 0.0990 & 82.174 & 99.8795 \\ \hline
$\alpha=2.5$, $\mu=2.5$, $\hat{r}=5$, $L=7$, $r=22$ & 0.0002 & 0.9186 & 6657.1 & 99.9862 \\ 
\bottomrule  
\end{tabular}
\end{table*}
%=================================================
%=================================================
\begin{table*}[t!]
\centering
\caption{Efficiency of~\eqref{eq: Final CDF} as compared to Brennan's formulation (Relative error $<10^{-6}$)}
\label{tab: CDF}
\begin{tabular}{c c c c c}
\toprule
Parameter settings & $F_R (r)$ & Elapsed time for~\eqref{eq: Final CDF} [s] & \begin{tabular}[c]{@{}c@{}} Elapsed time for numerical\\  integration [s] \end{tabular} & Time saving [\%] \\ \hline
$\alpha=0.5$, $\mu=2.5$, $\hat{r}=5$, $L=3$, $r=5$ & 0.0729 & 0.0132 & 1.0253 & 98.7122 \\ \hline
$\alpha=1.0$, $\mu=2.5$, $\hat{r}=5$, $L=4$, $r=15$ & 0.2235 & 0.0361 & 3.4463 &  98.9521 \\ \hline
$\alpha=1.5$, $\mu=2.5$, $\hat{r}=5$, $L=5$, $r=15$ & 0.0158 & 0.0420 & 10.4397 &  99.5972 \\ \hline
$\alpha=2.0$, $\mu=2.5$, $\hat{r}=5$, $L=6$, $r=20$ & 0.0311 &  0.0828 & 63.3356 & 99.8692  \\ \hline
$\alpha=2.5$, $\mu=2.5$, $\hat{r}=5$, $L=7$, $r=22$ & 0.0101 & 0.1086 & 537.762 & 99.9798  \\ 
\bottomrule  
\end{tabular}
\end{table*}
%=================================================
%=========================================================
\begin{figure}[t!]
\begin{center}
\includegraphics[trim={0cm 0cm 0cm 0cm},clip,scale=0.435]{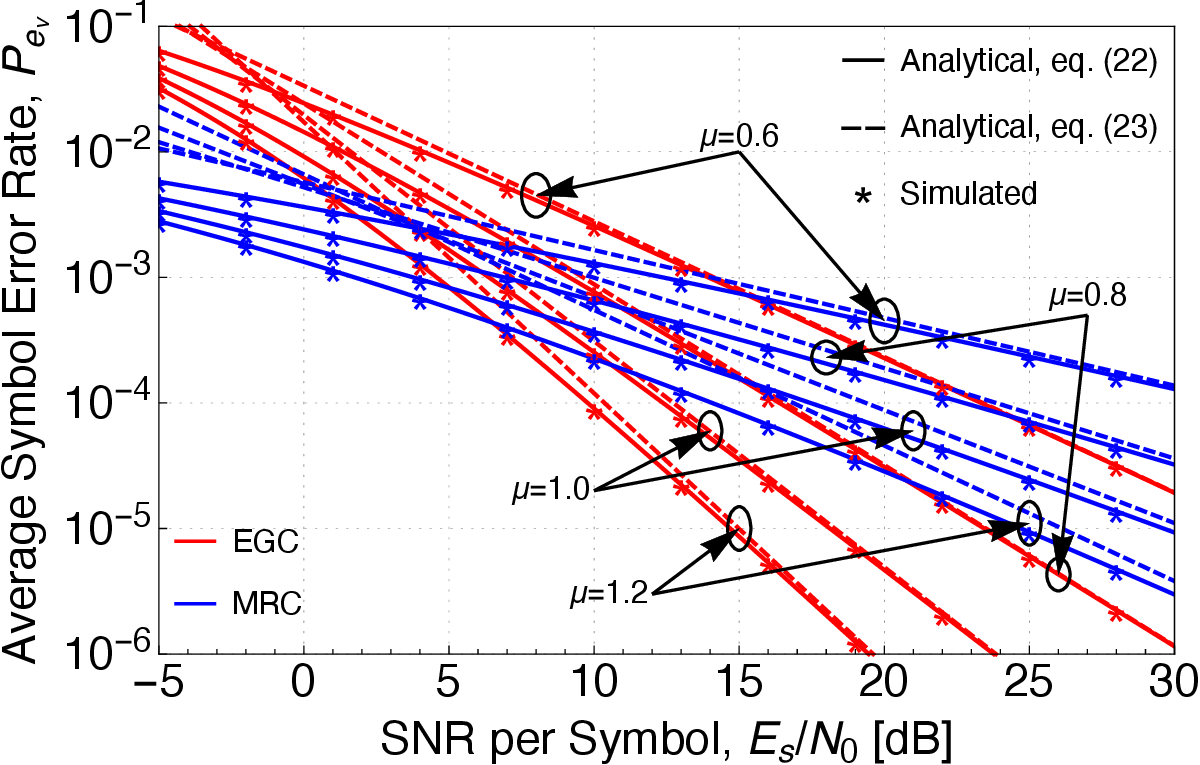}
\vspace{0cm}
\caption{ASER versus SNR per symbol for $\alpha=1.2$, $\hat{r}=2$, $L=3$, $\mathcal{G}=1$, and a range of values of $\mu$.}
\label{fig: ASER1}
\end{center} 
\end{figure}
%=========================================================
%=========================================================
\begin{figure}[t!]
\begin{center}
\includegraphics[trim={0cm 0cm 0cm 0cm},clip,scale=0.435]{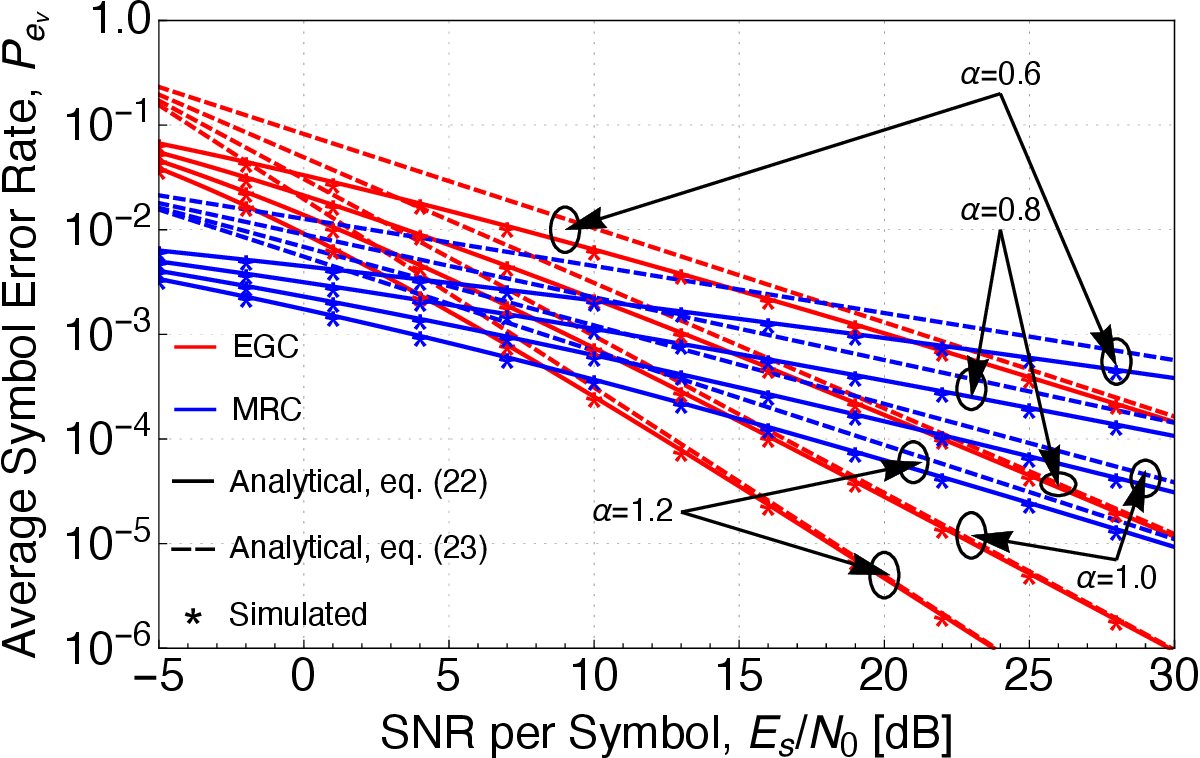}
\vspace{0cm}
\caption{ASER versus SNR per symbol for $\mu=0.9$, $\hat{r}=2$, $L=3$, $\mathcal{G}=1$, and a range of values of $\alpha$.}
\label{fig: ASER2}
\end{center} 
\end{figure}
%=========================================================
%=========================================================
\begin{figure}[t]
\begin{center}
\includegraphics[trim={0cm 0cm 0cm 0cm},clip,scale=0.435]{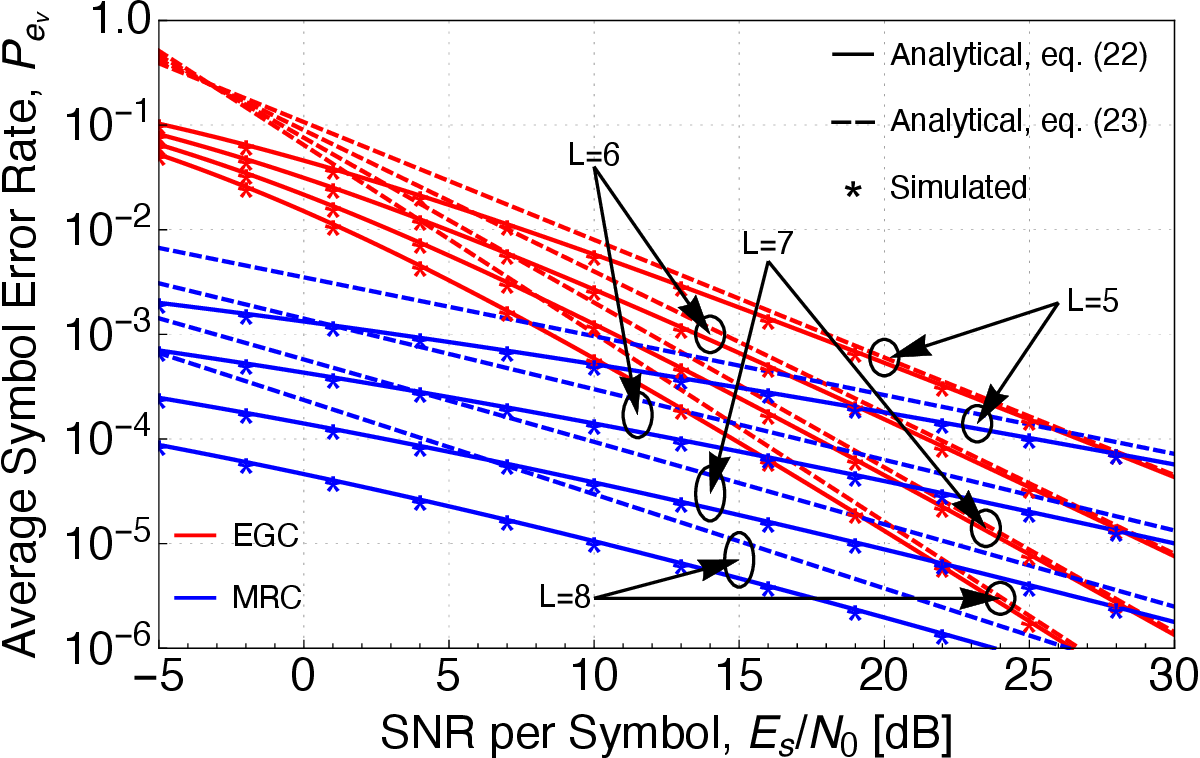}
\vspace{0cm}
\caption{ASER versus SNR per symbol for $\mu=0.5$, $\alpha=0.9$, $\hat{r}=1$, $\mathcal{G}=1$, and a range of values of $L$.}
\label{fig: ASER3}
\end{center} 
\end{figure}
%=========================================================
%=========================================================
\begin{figure}[t]
\begin{center}
\includegraphics[trim={0cm 0cm 0cm 0cm},clip,scale=0.435]{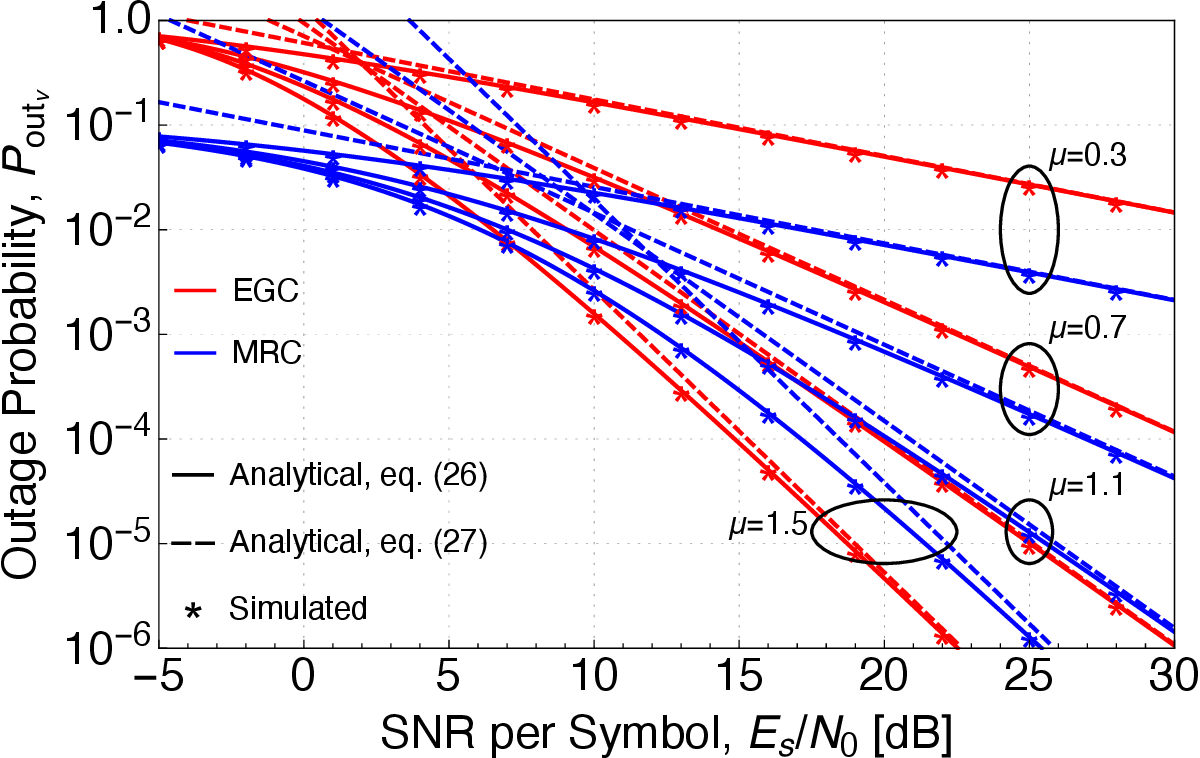}
\vspace{0cm}
\caption{OP versus SNR per symbol for $\alpha=1.2$, $\hat{r}=3$, $L=3$, ${\gamma_{\text{out}}=10}$~dB, and a range of values of $\mu$.}
\label{fig: OP1}
\end{center} 
\end{figure}
%=========================================================
%=========================================================
\begin{figure}[t]
\begin{center}
\includegraphics[trim={0cm 0cm 0cm 0cm},clip,scale=0.435]{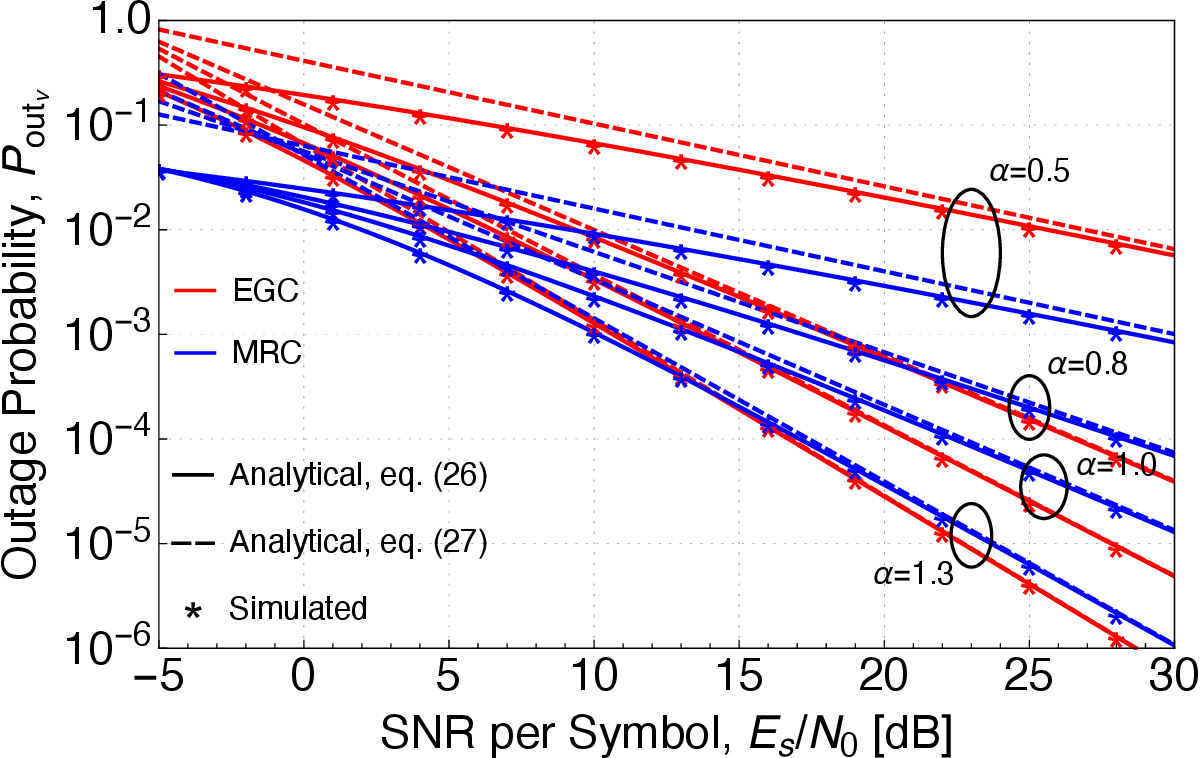}
\vspace{0cm}
\caption{OP versus SNR per symbol for $\mu=0.8$, $\hat{r}=6$, $L=3$, ${\gamma_{\text{out}}=10}$~dB, and a range of values of~$\alpha$.}
\label{fig: OP2}
\end{center} 
\end{figure}
%=========================================================
%=========================================================
\begin{figure}[t]
\begin{center}
\includegraphics[trim={0cm 0cm 0cm 0cm},clip,scale=0.435]{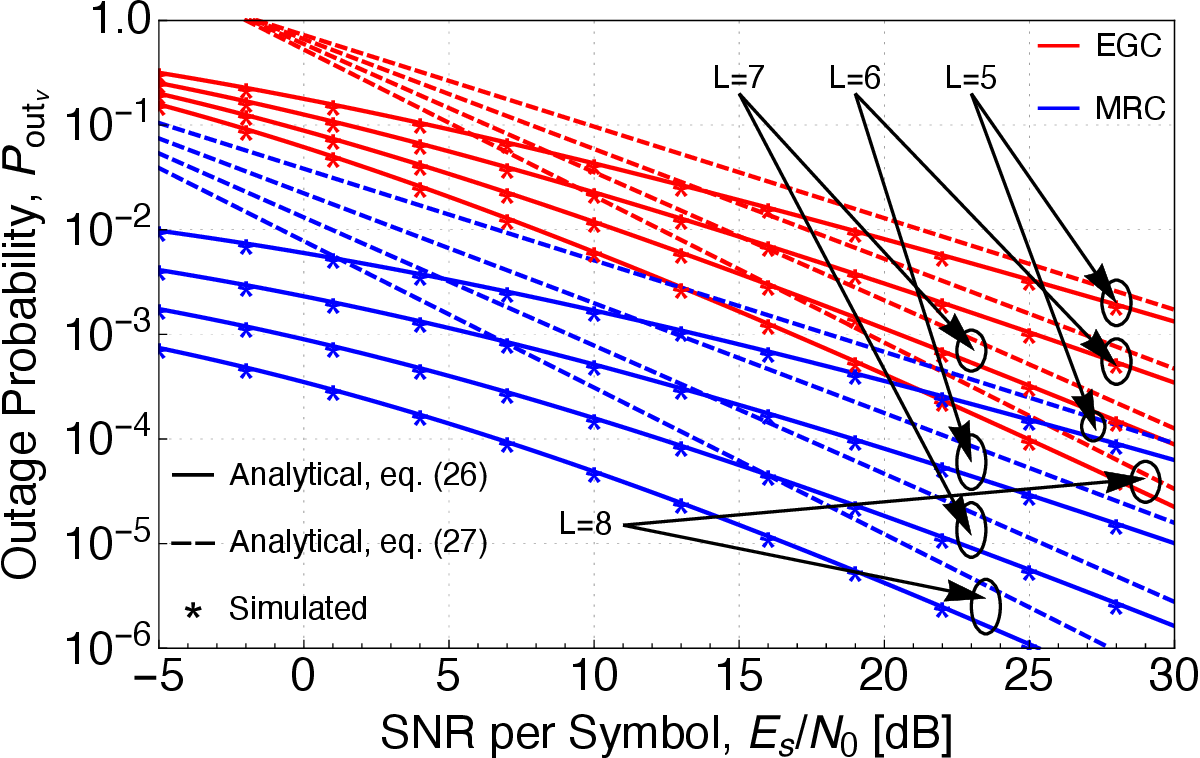}
\vspace{0cm}
\caption{OP versus SNR per symbol for $\mu=0.7$, $\alpha=0.5$, $\hat{r}=3$, ${\gamma_{\text{out}}=1}$~dB, and a range of values of $L$.}
\label{fig: OP3}
\end{center} 
\end{figure}
%=========================================================

Tables \ref{tab: Table2}--\ref{tab: CDF} show an efficiency analysis by comparing \eqref{eq: Final PDF} and \eqref{eq: Final CDF} with the exact formulations in \cite{Rahman11}, \cite{Kong2021}, and \cite{Brennan59}.
% The efficiency of \eqref{eq: Final PDF} is determined by comparing its computation time with those
% in \cite{Rahman11} and \cite{Kong2021}.
More specifically, Tables \ref{tab: Table2} and \ref{tab: Table3} show the elapsed time for \eqref{eq: Final PDF} as compared to \cite[eq. (4)]{Rahman11} and \cite[eq. (3)]{Kong2021}, respectively. 
Each computation time was recorded when the corresponding formulation reached a desired relative error less than $10^{-6}$.
Note in all the tables that our expressions provide remarkable time savings of above $98$\%. 
More importantly, notice that the elapsed time of the proposed expressions exhibit a small variation over $L$. In other words, the efficiency of our formulations is bearly dependent on the number of summands. On the other hand, the elapsed times for  \cite[eq. (4)]{Rahman11} and \cite[eq. (3)]{Kong2021} tend to grow exponentially with~$L$. Similar improvements are provided by the sum CDF in~\eqref{eq: Final CDF} when compared to Brennan's formulation, as can be observed from Table \ref{tab: CDF}.

Figs. \ref{fig: ASER1}--\ref{fig: ASER3} show the ASER as a function of the SNR per symbol for different values of $\mu$, $\alpha$, and $L$. 
The results indicate that the ASER for EGC and MRC receivers decreases as either $\mu$, $\alpha$, or $L$ increases. This is because the system's diversity order, $\alpha \mu L/2 \vartheta _{\nu}$, increases with any of these parameters. 
Interestingly, notice in Figs. \ref{fig: ASER1} and \ref{fig: ASER2} show that depending on the $\alpha$ and $\mu$ parameters, the performance of EGC surpasses the performance of MRC in the high SNR regime, showing a lower ABER for a given SNR per symbol.
Once again, observe that our derived ASER expressions match perfectly the numerical simulations.

Figs. \ref{fig: OP1}--\ref{fig: OP3} show the OP in terms of the SNR per symbol for different values of $\mu$, $\alpha$, and $L$. 
In all the figures, note how the OP decreases as $\mu$, $\alpha$, or $L$ increases.
In addition, as in the ASER case, observe in Figs. \ref{fig: OP1} and \ref{fig: OP2} that depending on the $\alpha$ and $\mu$ parameters, the performance of EGC overcomes the performance of MRC in the high SNR regime.
Finally, notice how our exact curves coincide perfectly with the numerical simulations.

\section{Conclusion}
\label{sec: Conclusions}
In this paper, we provided new, handy, exact formulations for the sum PDF and the sum CDF of i.i.d. $\alpha$-$\mu$ RVs. 
Our results are arguably the most efficient and tractable formulations to date. It is noteworthy that, as opposed to the available solutions, whose tractability becomes impracticable with the increase of the number of RVs, our solution' complexity is independent of the number of summands.
As an application example, we analyzed the performance of $L$-branch pre-detection EGC and MRC receivers operating over $\alpha$-$\mu$ fading channels. With this goal, we provided exact and asymptotic expressions for the ASER and the OP, which so far have been investigated in terms of time-consuming or approximate~solutions.  In future works, we aim to explore our proposed analysis considering a broad range of emergent scenarios, such as those in the presence of high co-channel interference. Additionally, an extension of this framework applied to i.n.i.d. shall be tackled.

\begin{appendices}
\section{Proof of \eqref{eq: Final PDF}}
\label{app: PDF R}
To find the PDF of the sum in \eqref{eq: sum}, we begin by taking the Laplace transform of the marginal PDF in \eqref{eq: PDF Xn}, i.e.,
\begin{align}
    \label{eq: Laplace Def}
    \nonumber \Laplace \left\{f_{R_n} \right\} (s) & \triangleq \mathbb{E} \left[\exp  \left( -s R_n\right)\right] \\
    & = \int _0^{\infty }\exp \left( -s r_n\right) f_{R_n}(r_n) \text{d}r_n,
\end{align}
\normalsize
where $s\in\mathbb{C}$. 
Replacing \eqref{eq: PDF Xn} into \eqref{eq: Laplace Def}, along with some algebraic manipulations, we obtain
\begin{align}
    \label{eq: Meijer G}
    \nonumber \Laplace \left\{f_{R_n}\right\} (s) =& \frac{\alpha  \mu ^{\mu }}{\Gamma (\mu ) \hat{r}^{\alpha  \mu }} \int _0^{\infty }\exp (-s r_n) r_{n}^{\alpha  \mu -1} \\
     & \times  G_{1,0}^{0,1} \left[ \begin {array} {c} - \\0 \\\end {array} \left| - \mu  \left(\frac{r_n}{\hat{r}}\right)^{\alpha }\right. \right] \text{d}r_n,
\end{align}
\normalsize
where $G_{u,v}^{p,q} \left[ \cdot \right]$ is the Meijer $G$-function~\cite[eq. (16.17.1)]{Olver10}. 

With the aid of \cite[eq. (07.34.02.0001.01)]{Mathematica} and after interchanging the order of integration, we rewrite \eqref{eq: Meijer G} as
\begin{align}\label{eq:factorized_contour}
    \nonumber \Laplace \left\{f_{R_n}\right\} (s) =& \frac{\alpha  \mu ^{\mu }}{\Gamma (\mu ) \hat{r}^{\alpha  \mu }} \left( \frac{1}{2 \pi  j} \right)  \oint_{\mathfrak{L}_{\textbf{s},1}} \Gamma \left(s_1\right) \left(\mu  \left(\frac{1}{\hat{r}}\right)^{\alpha }\right)^{-s_1} \\
    & \times \int_0^{\infty } \exp (-s r_n) r_{n}^{\alpha  \mu -1-\alpha  s_1} \text{d}r_n \ \text{d}s_1,
\end{align}
\normalsize
where $s_1$ is a complex variable of integration, and $\mathfrak{L}_{\textbf{s},1}$ is a closed contour in the complex plane that encloses all the poles of $\Gamma(s_1)$~\cite{mathai09}. 
The change in the order of integration is allowed since  $\int _0^{\infty } \left|\exp (-s r_n) f_{R_n}(r_n)\right| \text{d}r_n < \infty$.
After evaluating the resulting inner integral in \eqref{eq:factorized_contour}, we obtain
\begin{align}
    \label{eq: Complex int1}
    \Laplace \left\{f_{R_n}\right\} (s) =& \frac{\alpha  \mu ^{\mu }}{\Gamma (\mu ) \left(\hat{r} s\right)^{\alpha  \mu }} \left( \frac{1}{2 \pi  j} \right) \oint_{\mathfrak{L}_{\textbf{s},1}^\dagger} \Phi \left(s_1 \right) \text{d}s_1,
\end{align}
\normalsize
where the integration kernel $\Phi \left(s_1 \right)$ is given as
\begin{align}
    \label{}
    \Phi \left(s_1 \right)= \Gamma \left(s_1\right) \Gamma \left(\alpha  \left(\mu -s_1\right)\right) \left(\mu  \left(\frac{1}{\hat{r} s}\right)^{\alpha }\right)^{-s_1},
\end{align}
and $\mathfrak{L}_{\textbf{s},1}^\dagger$ is a new contour that appears as the integration over $r_n$ deforms $\Phi \left(s_1 \right)$.
This new contour allows us to represent $\Laplace \left\{f_{R_n}\right\} (s)$ in terms of a meromorphic function analytically defined on the strip $0<s_1<\mu$ and with singularities located at $s_1=-i$ and $s_1=(\alpha  \mu +i)/\alpha$, $\forall i \in \mathbb{N}_0$.
For convenience, we define $\mathfrak{L}_{\textbf{s},1}^\dagger$ as a contour that starts at the point $\infty + j \xi _1$ and ends at the point $\infty + j \xi _2$, where $-\infty<\xi _1<\xi _2<+\infty$ such that $\mathfrak{L}_{\textbf{s},1}^\dagger$ encloses all the poles of $\Gamma \left(\alpha  \left(\mu -s_1\right)\right)$ in the positive~direction.

Now, using Cauchy's residue theorem, \eqref{eq: Complex int1} can be expressed as follows~\cite{Kreyszig10}:
\begin{align}
    \label{eq: Res Def}
    \Laplace\left\{f_{R_n}\right\} (s) = \frac{\alpha  \mu ^{\mu }}{\Gamma (\mu ) \left(\hat{r} s\right)^{\alpha  \mu }} \sum _{i=0}^{\infty }  \mathcal{R} \left[ \Phi \left(s_1 \right);\left\{s_1=-i\right\}\right],
\end{align}
\normalsize
where $\mathcal{R}\left[ \Phi \left(s_1 \right);\left\{s_1=-i\right\}\right]$ is the residue of $\Phi \left(s_1 \right)$ at the poles $s_1=-i$.
Then, using the residue operation~\cite[eq. (16.3.3)]{Kreyszig10}, we have
\begin{align}
    \label{eq: Sum Residues Xn}
    \nonumber \Laplace \left\{f_{R_n} \right\} (s) = & \frac{\alpha  \mu ^{\mu }}{\Gamma (\mu ) \left(\hat{r} s\right)^{\alpha  \mu }} \\
    & \times \sum _{i=0}^{\infty } \frac{\Gamma (\alpha  (i+\mu )) \left(-\mu  \left(\frac{1}{\hat{r} s}\right)^{\alpha }\right)^i}{i!}.
\end{align}
\normalsize

Since the RVs $\{R_n\}_{n=1}^L$ are mutually independent, the sum PDF is given~by
\begin{equation}\label{eq:conv}
   f_{R}(r) =  f_{R_1}(r_1) * f_{R_2}(r_2) * \cdots  * f_{R_L}(r_L).
\end{equation}
\normalsize
In addition, as $\{R_n\}_{n=1}^L$ are i.i.d. RVs, the Laplace transform of \eqref{eq:conv} yields~\cite{papoulis02}
\begin{align}
    \label{eq: Laplace Power}
    \Laplace  \left\{ f_{R}\right\}(s) & = \left[ \Laplace  \left\{f_{R_n}\right\} (s) \vphantom{2^{2}}\right]^L.
\end{align}
\normalsize
Accordingly, the inverse Laplace transform of \eqref{eq: Laplace Power} is given by~\cite{bookSchiff99}
\begin{align}
    \label{eq: Inverse Laplace Def}
    \nonumber f_{R}(r) \triangleq & \Laplace^{-1}  \left\{ \Laplace \left\{ f_{R}\right\}(s)\right\}(r) \\
    =& \left(\frac{1}{2 \pi j} \right) \oint_{\mathfrak{L}_{\textbf{s},2}}  \exp \left( s r \right)  \Laplace  \left\{ f_{R}\right\}(s)  \text{d}s,
\end{align}
\normalsize
where $\mathfrak{L}_{\textbf{s},2}$ is the Bromwich contour. Substituting \eqref{eq: Sum Residues Xn} into \eqref{eq: Laplace Power}, and then this result into~\eqref{eq: Inverse Laplace Def}, we obtain
\begin{align}
    \label{eq: Inverse Laplace s}
    \nonumber f_{R}(r) = & \left(\frac{\alpha  \mu ^{\mu }}{\Gamma (\mu ) \hat{r}^{\alpha  \mu }}\right)^L \left(\frac{1}{2 \pi j} \right) \oint_{\mathfrak{L}_{\textbf{s},2}} \exp \left( s z \right) s^{-L \alpha  \mu }\\
    & \times \left(\sum _{i=0}^{\infty } s^{-\alpha i} a_i \right)^L \text{d} s,
\end{align}
\normalsize
where 
\begin{align}
    \label{}
    a_i = \frac{\Gamma (\alpha  (i+\mu )) \left(-\mu  \left(\frac{1}{\hat{r}}\right)^{\alpha }\right)^i}{i!}.
\end{align}

From \eqref{eq: Inverse Laplace s}, we let
\begin{align}
    \label{eq: Differentiation}
    \left( \sum _{i=0}^{\infty}   s^{- \alpha i} a_i \right)^L =\sum _{i=0}^{\infty}s^{-\alpha i} \delta_i.
\end{align}
\normalsize
Now, we employ the following differential equation:
\begin{align}
    \label{eq: differential equation}
    \zeta \left(\zeta^{L }\right) '=L \zeta^{L} \zeta^{'},
\end{align}
where the apostrophe denotes the derivative with respect to $s^{- \alpha}$, and 
\begin{align}
    \label{}
    \zeta= &\sum _{i=0}^{\infty}   s^{- \alpha i} a_i \\
    \zeta^{L} =& \sum _{i=0}^{\infty}s^{-\alpha i} \delta_i.
\end{align}
After solving \eqref{eq: differential equation}, it immediately follows that the coefficients $\delta_i$ can be obtained by \eqref{eq: Coefficients}. 

Replacing \eqref{eq: Differentiation} and \eqref{eq: Coefficients} into \eqref{eq: Inverse Laplace s}, we get
\begin{align}
    \label{eq: Inverse Laplace s step 2}
    \nonumber f_{R}(r) = & \left(\frac{\alpha  \mu ^{\mu }}{\Gamma (\mu ) \hat{r}^{\alpha  \mu }}\right)^L \left(\frac{1}{2 \pi j} \right) \oint_{\mathfrak{L}_{\textbf{s},2}} \exp \left( s z \right) s^{-L \alpha  \mu }\\
    & \times \sum _{i=0}^{\infty}s^{-\alpha i} \delta_i \ \text{d} s.
\end{align}

Notice that \eqref{eq: Inverse Laplace s step 2} has now a pole of order $(L \alpha \mu+ \alpha i)$ at $s=0$.  
Therefore, if $(L \alpha \mu+ \alpha i) \in \mathbb{Z}^{+}$, then  $\mathfrak{L}_{\textbf{s},2}$ is the standard Bromwich contour defined in~\cite[eq. (4.4)]{bookSchiff99}.
On the other hand, if $(L \alpha \mu+ \alpha i)\notin \mathbb{Z}^{+}$, then \eqref{eq: Inverse Laplace s step 2} is a multi-valued function. In this case, the Bromwich contour must be modified in order to make \eqref{eq: Inverse Laplace s step 2} a single-valued function. To do so, we define a branch point at $s=0$ and a branch cut along the negative real axis.
Accordingly, we now define $\mathfrak{L}_{\textbf{s},2}$ as a loop that begins and ends at $-\infty$, encircling the pole $s=0$ once in the positive~direction. 
Notice that for both cases, $(L \alpha \mu+ \alpha i) \in \mathbb{Z}^{+} $ or $(L \alpha \mu+ \alpha i) \notin \mathbb{Z}^{+} $, the selected contours for $\mathfrak{L}_{\textbf{s},2}$ enclose the pole $s=0$. 
Thus, by applying the residue theorem, \eqref{eq: Inverse Laplace s step 2} can be rewritten as
\begin{align}
    \label{eq: residue step 2}
        f_{R}(r) = & \left(\frac{\alpha  \mu ^{\mu }}{\Gamma (\mu ) \hat{r}^{\alpha  \mu }}\right)^L \mathcal{R} \left[ \Theta (s);(s=0)\right],
\end{align}
where 
\begin{align}
    \label{eq: integration kernel 2}
    \Theta (s)=   \sum _{i=0}^{\infty} \delta_i  s^{-\alpha i-L \alpha  \mu } \exp \left( s z \right) 
\end{align}
is the integration kernel of \eqref{eq: Inverse Laplace s step 2}.

Finally, since the residue operation is a \textit{linear mapping}, then we can perform a term-by-term inversion with the aid of $\mathcal{R} \left[ s^{-a} \exp \left(s z \right);(s=0)\right]=z^{a-1}/ \Gamma (a)$ \cite{bookSchiff99}, resulting in \eqref{eq: Final PDF}. This completes the proof.

\section{Absolute Convergence of \eqref{eq: Final PDF}}
\label{app: Absolute Convergence}

Herein, we show the absolute convergence of~\eqref{eq: Final PDF}. To this end, we have to prove that~\cite{Kreyszig10}
\begin{align}    \label{eq: absolute value}
     \sum _{i=0}^{\infty } \left| \frac{\delta_i r^{\alpha  i+\alpha  \mu  L-1}}{\Gamma (\alpha  i+\alpha  \mu  L)} \right|< \infty,
\end{align}
or, equivalently,
\begin{align}    \label{eq: absolute delta i}
     \sum _{i=0}^{\infty }  \frac{ \left|\delta_i \right| r^{\alpha  i+\alpha  \mu  L-1}}{\Gamma (\alpha  i+\alpha  \mu  L)}< \infty.
\end{align}

An upper bound for $|\delta_i|$, $\forall i\geq 1$, can be obtained by taking the absolute value of the summands, i.e.,
\begin{align}
    \label{eq: abs delta}
    \left|\delta_i\right|  \leq & \frac{1}{i \Gamma (\alpha  \mu)} \sum _{l=1}^i \frac{\left|\delta_{i-l}\right| \left|l L+l-i \right| \Gamma (\alpha  (l+\mu )) \left(\mu  \left(\frac{1}{\hat{r}}\right)^{\alpha }\right)^l}{l!}.
\end{align}
Notice that depending on $\alpha$, \eqref{eq: abs delta} can be either a decreasing function (if $0<\alpha < 1$) or an increasing function (if $\alpha \geq 1$). Hence, we must obtain a bound for each condition.

For the case of $0<\alpha < 1$, $\left|\delta_{i}\right|$ can be bounded as follows:
\begin{align}
    \label{eq: abs delta k<1}
    \nonumber \left|\delta_i\right| \overset{(a)}{\leq} &  \frac{\mu \left|\delta_{i-1}\right| \left|L+1-i \right| \Gamma (\alpha  (1+\mu ))  }{i \Gamma (\alpha  \mu)}   \left(\frac{1}{\hat{r}}\right)^{\alpha } \\
    \nonumber \overset{(b)}{<} & 2 \mu L \left|\delta_{i-1}\right|  \Gamma (\alpha  (1+\mu ))    \left(\frac{1}{\hat{r}}\right)^{\alpha } \\
    \overset{(c)}{=} & r^{\alpha  \mu  L-1}   \Gamma (\alpha  \mu )^L  E_{\alpha ,\alpha  \mu  L}\left(2 \mu  L \left(\frac{r}{\hat{r}}\right)^{\alpha } \Gamma (\mu  \alpha +\alpha )\right),
\end{align}
where in step (a) we used the fact that  $\left|\delta_{i}\right|<\left|\delta_{i-1}\right|$, in step (b) we employed that  $\left| -i+1+ L\right| \leq i L$ and that ${1/\Gamma (\alpha  \mu )<2}$,~$\forall (\alpha,\mu)$, and in step (c) we recursively solved the coefficients $\left| \delta_{i-1}\right|$.

Now, with the aid of \eqref{eq: abs delta k<1}, an upper bound for \eqref{eq: absolute delta i} can be obtained as
\begin{align}
    \label{eq: reulting series k<1}
    \nonumber \sum _{i=0}^{\infty } & \frac{\left|\delta_i\right| r^{\alpha  i+\alpha  \mu  L-1}}{\Gamma (\alpha  i+\alpha  \mu  L)} \overset{(a)}{<} r^{\alpha  \mu  L-1}   \Gamma (\alpha  \mu )^L  \\
    \nonumber & \times \left[ \frac{1}{\Gamma (L \alpha  \mu )} + \sum _{i=1}^{\infty } \frac{ \left(2 \mu  L \Gamma (\mu  \alpha +\alpha ) \left(\frac{r}{\hat{r}}\right)^{\alpha }\right)^i}{\Gamma (i \alpha +L \mu  \alpha )} \right] \\
    & \overset{(b)}{=} r^{\alpha  \mu  L-1}   \Gamma (\alpha  \mu )^L  E_{\alpha ,\alpha  \mu  L}\left(2 \mu  L \left(\frac{r}{\hat{r}}\right)^{\alpha } \Gamma (\mu  \alpha +\alpha )\right),
\end{align}
where in step (a) we separated the first term of the sum and used \eqref{eq: abs delta k<1}, and in step (b) we employed the Mittag-Leffler function denoted by $E_{(\cdot,\cdot)}(\cdot)$~\cite[eq. (10.46.3)]{Olver10}.

For the case of $\alpha \geq 1$, $\left|\delta_{i}\right|$ can be bounded as
\begin{align}
    \label{eq: abs delta k>1}
    \left|\delta_i\right|  \nonumber \overset{(a)}{<} & 2 L \sum _{l=1}^i \frac{\left|\delta_{i-l}\right| \Gamma (\alpha  (l+\mu )) \left(\mu  \left(\frac{1}{\hat{r}}\right)^{\alpha }\right)^l}{l!} \\
    \overset{(b)}{\leq} &   \frac{2  L \, \Gamma (\alpha  \mu )^L \Gamma (\alpha  (l+\mu )) \left(\mu  \left(\frac{1}{\hat{r}}\right)^{\alpha }\right)^i}{i!}
\end{align}
where in step (a) we employed that  $\left| -i+l+l L\right| \leq i L$, for $1 \leq l \leq i$, and that ${1/\Gamma (\alpha  \mu )<2}$,~$\forall (\alpha,\mu)$, and in step (b) we only used the last term of the sum (i.e., $l=i$).

From \eqref{eq: abs delta k>1}, an upper bound for \eqref{eq: absolute delta i} can be attained as
\begin{align}
    \label{eq: reulting series k>1}
    \nonumber \sum _{i=0}^{\infty } & \frac{\left|\delta_i\right| r^{\alpha  i+\alpha  \mu  L-1}}{\Gamma (\alpha  i+\alpha  \mu  L)} \overset{(a)}{<} r^{\alpha  \mu  L-1}  \Gamma (\alpha  \mu )^L  \\
    \nonumber  \times & \left[\frac{1}{\Gamma (L \alpha  \mu )} +2 L \sum _{i=1}^{\infty } \frac{\Gamma (\alpha  (i+\mu )) \left(\mu  \left(\frac{r}{\hat{r}}\right)^{\alpha }\right)^i}{i! \Gamma (i \alpha +L \mu  \alpha )}\right]\\
    \nonumber \overset{(b)}{\leq} & r^{\alpha  \mu  L-1}  \Gamma (\alpha  \mu )^L  \left[\frac{1}{\Gamma (L \alpha  \mu )} +2 L \sum _{i=1}^{\infty } \frac{\left(\mu  \left(\frac{r}{\hat{r}}\right)^{\alpha }\right)^i}{i!}\right]\\
    \overset{(c)}{=} &  r^{\alpha  \mu  L-1}  \Gamma (\alpha  \mu )^L \left[ \frac{1}{\Gamma (L \alpha  \mu )} +2 L \exp \left(\mu  \left(\frac{r}{\hat{r}}\right)^{\alpha }\right)-2 L \right],
\end{align}
where in step (a) we separated the first term of the sum and used \eqref{eq: abs delta k>1}, in step (b) the relationship $\Gamma (\alpha  (i+\mu ))/ \Gamma (\alpha  i+\alpha  \mu  L) \leq 1$ was applied, and in step (c) we used \cite[eq.~4.2.1]{abramowitz72}.

Finally, as \eqref{eq: reulting series k<1} and \eqref{eq: reulting series k>1} exist and are finite, then 
\eqref{eq: Final PDF} converges absolutely $\forall (\alpha, \mu, \hat{r})$, which completes the proof.

\section{The $\alpha$-$\mu$ Distribution as a Check}
\label{app: singel alpha-mu}
Using $L=1$ and separating the first term of the sum in \eqref{eq: Final PDF}, we obtain 
\begin{align}
    \label{eq: Chi 1}
    f_R (r)=\frac{\alpha  \mu ^{\mu } r^{\alpha  \mu -1}}{\Gamma (\mu ) \hat{r}^{\alpha  \mu }} \left( \frac{\delta _0}{\Gamma (\alpha  \mu )}+ \sum _{i=1}^{\infty } \frac{\delta _i r^{\alpha  i}}{\Gamma (i \alpha +\mu  \alpha )}\right),
\end{align}
where the coefficients $\delta_i$ now reduce to
\begin{subequations}
\label{eq: Reduced Coefficients Chi}
\begin{align}
    \delta_0=& \Gamma (\alpha  \mu )\\
    \delta_i=& \frac{\Gamma (\alpha  (i+\mu )) \left(-\mu  \hat{r}^{-\alpha }\right)^i}{\Gamma (i+1)}, \ \ \ \ \ i \geq 1.
\end{align}
\end{subequations}

Finally, substituting \eqref{eq: Reduced Coefficients Chi} into \eqref{eq: Chi 1}, and applying \cite[eq. (5.2.11.11)]{prudnikov98Vol1}, we obtain
\begin{align}
    \label{eq: Pre-Final Chi}
    f_R (r)= \frac{\alpha  \mu ^{\mu } r^{\alpha  \mu -1}}{\Gamma (\mu ) \hat{r}^{\alpha  \mu }} \exp \left(-\mu  \left(\frac{r}{\hat{r}}\right)^{\alpha }\right),
\end{align}
which is the $\alpha$-$\mu$ PDF.

\section{Truncation Bounds for the Sum PDF}
\label{app: Truncation Bounds}

The truncation error in \eqref{eq: truncation def PDF} can be bounded by taking the absolute values of the summands as follows:
\begin{align}
    \label{app: bound PDF}
    \epsilon_f < \left(\frac{\alpha  \mu ^{\mu }}{\Gamma (\mu ) \hat{r}^{\alpha  \mu }}\right)^L \sum _{i=\mathcal{N}_{\text{T}}}^{\infty } \frac{ \left| \delta _i \right|  r^{\alpha  i+\alpha  \mu  L-1}}{\Gamma (i \alpha +L \mu  \alpha )}.
\end{align}

If $0<\alpha<1$, \eqref{app: bound PDF} can be further bounded as
\begin{align}
    \label{}
    \nonumber \epsilon_f \overset{(a)}{<} & \, r^{\alpha  \mu  L-1} \left(\frac{\alpha  \mu ^{\mu } \Gamma (\alpha  \mu )}{\Gamma (\mu ) \hat{r}^{\alpha  \mu }}\right)^L \\
    \nonumber  & \times \sum _{i=\mathcal{N}_{\text{T}}}^{\infty } \frac{\left(2 \mu  L \left(\frac{r}{\hat{r}}\right)^{\alpha } \Gamma (\mu  \alpha +\alpha )\right)^i}{\Gamma (i \alpha +L \mu  \alpha )} \\
    \overset{(b)}{<} & \, \mathcal{B}_{f}^\dagger,
\end{align}
where in step (a) we used the upper bound in \eqref{eq: abs delta k<1}, and in step (b) we employed the series representation of the Mittag-Leffler function \cite{Olver10} (refer to \eqref{eq: bound k<1}).

On the other hand, if $\alpha \geq 1$, \eqref{app: bound PDF} can be further bounded as
\begin{align}
    \label{}
    \nonumber \epsilon_f \overset{(a)}{<}  & \frac{2 L}{r} \left(\frac{\left(\alpha  \mu ^{\mu } \Gamma (\alpha  \mu )\right) \left(\frac{r}{\hat{r}}\right)^{\alpha  \mu }}{\Gamma (\mu )}\right)^L \\
   \nonumber  & \times \sum _{i=\mathcal{N}_{\text{T}}}^{\infty } \frac{\Gamma (\alpha  (i+\mu )) \left(\mu  \left(\frac{r}{\hat{r}}\right)^{\alpha }\right)^i}{i! \Gamma (i \alpha +L \mu  \alpha )} \\
   \nonumber  \overset{(b)}{<}  & \frac{2 L}{r} \left(\frac{\left(\alpha  \mu ^{\mu } \Gamma (\alpha  \mu )\right) \left(\frac{r}{\hat{r}}\right)^{\alpha  \mu }}{\Gamma (\mu )}\right)^L \sum _{i=\mathcal{N}_{\text{T}}}^{\infty } \frac{\left(\mu  \left(\frac{r}{\hat{r}}\right)^{\alpha }\right)^i}{i!}\\
    \overset{(c)}{<} & \, \mathcal{B}_{f}^*,
\end{align}
where in step (a) we used the upper bound in \eqref{eq: abs delta k>1}, in step (b) we used the fact that $\Gamma (\alpha  (i+\mu ))/\Gamma (i \alpha +L \mu  \alpha ) \leq 1$, and in step (c) we employed the series representation of the lower incomplete gamma function~\cite[eq. (06.06.06.0002.01)]{Mathematica} along with some algebraic manipulations (refer to \eqref{eq: bound k>=1}).

\section{Proof of \eqref{eq: Final CDF}}
\label{app: CDF R}
The CDF of the sum in \eqref{eq: sum} can be found from \eqref{eq: Final PDF} as
\begin{align}
    \label{eq: CDF 1}
    \nonumber F_{R}(r) \triangleq& \int_{0}^{r}  f_{R}(\upsilon) \text{d} \upsilon \\ 
    = &\left(\frac{\alpha  \mu ^{\mu }}{\Gamma (\mu ) \hat{r}^{\alpha  \mu }}\right)^L \int_{0}^{r}  \sum _{i=0}^{\infty } \frac{\delta_i \upsilon^{\alpha  i+\alpha  \mu  L-1}}{\Gamma (i \alpha +L \mu  \alpha )} \text{d} \upsilon.
\end{align}
Due to the absolute convergence of \eqref{eq: Final PDF} (vide Appendix \ref{app: Absolute Convergence}), we can interchange the order of integration in \eqref{eq: CDF 1}, resulting in \eqref{eq: Final CDF}. This completes the proof.

\end{appendices}
\bibliographystyle{IEEEtran}
\bibliography{Bibliography}

\end{document}